\documentclass[a4paper,USenglish]{lipics}

\usepackage{ifthen}
\newcommand{\techRep}{true} 
\newcommand{\iftechrep}{\ifthenelse{\equal{\techRep}{true}}}

\usepackage{microtype}
\usepackage{amsmath}
\usepackage{amssymb}
\usepackage{color}
\usepackage{tikz}
\usetikzlibrary{arrows}

\usepackage{btran}
%
%
\theoremstyle{theorem}
\newtheorem{proposition}[theorem]{Proposition}
\theoremstyle{remark}
\newtheorem{claim}{Claim}


%

%
\newenvironment{qlemma}[1]{%
{\par\mbox{}\newline\noindent\bf Lemma #1.}%
\begin{itshape}%
}{%
\end{itshape}%
}

\newenvironment{qproposition}[1]{%
{\mbox{}\newline\noindent\bf Proposition #1.}
\begin{itshape}%
}{%
\end{itshape}%
}


\definecolor{orange3}{rgb}{1.0,0.2538,0.1681}
\definecolor{blau}{rgb}{0,.39608,0.74118}
\definecolor{rot}{rgb}{0.79216,.12941,0.24706}



\title{Stabilization of Branching Queueing Networks\footnote{S.~Kiefer is supported by a DAAD postdoctoral fellowship.}}
\titlerunning{Stabilization of Branching Queueing Networks} 

\author[1]{Tom\'{a}\v{s} Br\'{a}zdil}
\author[2]{Stefan Kiefer}
\affil[1]{Faculty of Informatics, Masaryk University, Czech Republic\\
  \texttt{brazdil@fi.muni.cz}}
\affil[2]{Department of Computer Science, University of Oxford, United Kingdom\\
  \texttt{stefan.kiefer@cs.ox.ac.uk}}
\authorrunning{T. Br\'{a}zdil and S. Kiefer} 

\Copyright[nc-nd]
          {Tom\'{a}\v{s} Br\'{a}zdil and Stefan Kiefer}

\subjclass{G.3 Probability and Statistics}
\keywords{continuous-time Markov decision processes, infinite-state systems, performance analysis}

\iftechrep{}{

\serieslogo{}
\volumeinfo
  {Billy Editor, Bill Editors}
  {2}
  {Conference title on which this volume is based on}
  {1}
  {1}
  {1}
\EventShortName{}
\DOI{10.4230/LIPIcs.xxx.yyy.p}
}

\begin{document}
\maketitle

\sloppy

\newcommand{\N}{\mathbb{N}}
\newcommand{\Q}{\mathbb{Q}}
\newcommand{\Qp}{\mathbb{Q}_{\ge 0}}
\newcommand{\R}{\mathbb{R}}
\newcommand{\Rp}{\mathbb{R}_{\ge 0}}
\newcommand{\Z}{\mathbb{Z}}
\renewcommand{\P}[1]{{\cal P}\left(#1\right)}
\newcommand{\Prob}{\mathit{Prob}}
\newcommand{\Prod}[1]{\mathsf{R}^{(#1)}}
\newcommand{\Ex}[1]{\mathbb{E}\left[#1\right]}
\newcommand{\E}{\mathbb{E}}
\newcommand{\tim}{\mathbf{T}}
\newcommand{\norm}[1]{\left\| #1 \right\|}
\newcommand{\Net}{\mathcal{N}}
\newcommand{\tV}{\widetilde{V}}
\newcommand{\bSigma}{\bar{\Sigma}}

\newcommand{\va}{\boldsymbol{a}}
\newcommand{\valpha}{\boldsymbol{\alpha}}
\newcommand{\vb}{\boldsymbol{b}}
\newcommand{\vd}{\boldsymbol{d}}
\newcommand{\vDelta}{\boldsymbol{\Delta}}
\newcommand{\ve}{\boldsymbol{e}}
\newcommand{\vf}{\boldsymbol{f}}
\newcommand{\vlambda}{\boldsymbol{\lambda}}
\newcommand{\vmu}{\boldsymbol{\mu}}
\newcommand{\vq}{\boldsymbol{q}}
\newcommand{\vr}{\boldsymbol{r}}
\newcommand{\vu}{\boldsymbol{u}}
\newcommand{\vv}{\boldsymbol{v}}
\newcommand{\vw}{\boldsymbol{w}}
\newcommand{\vx}{\boldsymbol{x}}
\newcommand{\vy}{\boldsymbol{y}}
\newcommand{\vz}{\boldsymbol{z}}
\newcommand{\vzero}{\boldsymbol{0}}
\newcommand{\vone}{\boldsymbol{1}}

\newcommand{\as}[1]{\va^{(#1)}}
\newcommand{\asT}[1]{\va^{(#1)T}}
\newcommand{\es}[1]{\ve^{(#1)}}
\newcommand{\fs}[1]{\vf^{(#1)}}
\newcommand{\esT}[1]{\ve^{(#1)T}}
\newcommand{\ells}[1]{\ell^{(#1)}}
\newcommand{\qs}[1]{\vq^{(#1)}}
\newcommand{\qsT}[1]{\vq^{(#1)T}}
\newcommand{\ms}[1]{m^{(#1)}}
\newcommand{\xs}[1]{\vx^{(#1)}}

\begin{abstract}
 Queueing networks are gaining attraction for the performance analysis of parallel computer systems.
 A Jackson network is a set of interconnected servers, where the completion of a job at server~$i$ may result
  in the creation of a new job for server~$j$.
 We propose to extend Jackson networks by ``branching'' and by ``control'' features.
 Both extensions are new and substantially expand the modelling power of Jackson networks.
 On the other hand, the extensions raise computational questions, particularly concerning the stability of the networks,
  i.e, the ergodicity of the underlying Markov chain.
 We show for our extended model that it is decidable in polynomial time if there exists a controller that achieves stability.
 Moreover, if such a controller exists, one can efficiently compute a static randomized controller
  which stabilizes the network in a very strong sense; in particular, all moments of the queue sizes are finite.
\end{abstract}

\section{Introduction} \label{sec:intro}
Queueing theory plays a central role in the performance analysis of computer systems.
In particular, \emph{queueing networks} are gaining attraction as models of parallel systems.
A queueing network is a set of processing units (called \emph{servers}), each of which performs tasks (called \emph{jobs}) of a certain type.
Each server has its own queue of jobs waiting to be processed.
The successful completion of a job may trigger one (or more) new jobs (of possibly different type) that need to be processed as well.
In addition to this ``internal'' job creation, so-called \emph{open} queueing networks allow for new jobs to arrive ``externally'', i.e., from outside.

Queueing networks are a popular model for both hardware and software systems because of their simplicity and generality.
On the hardware side, queueing networks can, e.g., be used for modeling multi-core processors, see e.g.~\cite{Zisgen08}
 and the references in~\cite{DengPurvis11}.
One advantage of queueing-based analyses is their scalability with growing parallelism; e.g., it is said in~\cite{Madan11}:
``Cycle-accurate full-system performance simulators do not scale well beyond a few tens of processor cores at best.
As such, analytical models based on the theory of queueing systems, are a logical choice for developing a basic understanding
 of the fundamental tradeoffs in future, large-scale multi-core systems.''
On the software side, queueing networks are used for modeling message passing.
It is said in~\cite{LaTorre08}:
``Two natural classes of systems can be modeled using such a framework:
asynchronous programs on a multi-core computer and distributed programs communicating on a network.''
Of course, the realm of queueing networks stretches far beyond computer science, see \cite{Bolch,book:Daigle10}.

The simplest queueing networks are so-called \emph{Jackson networks}~\cite{Jackson57}:
Given two servers $i, j \in \{1, \ldots, n\}$, there is a ``rule'' of the form $i \btran{p_{ij}} j$ which specifies the probability $p_{ij}$
 that the completion of an $i$-job results in the creation of a $j$-job.
There are also rules $i \btran{p_{i0}} \varepsilon$ where $p_{i0} = 1 - \sum_{j} p_{ij}$ specifies the probability that no new job is created.
Each server~$i$ has a \emph{rate}~$\mu_i$ with which an $i$-job is processed if there is one.
In addition, there is a rate~$\alpha_i$ with which $i$-jobs arrive from outside the network.
The processing times and the external arrivals are exponentially distributed,
 so that a Jackson network describes a \emph{continuous-time Markov chain (CTMC)}.
It was shown in Jackson's paper~\cite{Jackson57} that
 if the rate $\lambda_i$ of internal \emph{and} external arrivals at server~$i$ is less than~$\mu_i$ for all~$i$,
  then the network is \emph{stable}, i.e., the average queue length is finite and almost surely all queues are empty infinitely often.
Moreover, Jackson networks allow a \emph{product form}, i.e., the steady-state distribution of the queue lengths can be expressed
 as a product of functions $\pi_i(k)$, where $\pi_i(k)$ is the steady-state probability that queue~$i$ has length~$k$.

\begin{example}[network processor] \label{ex:network-processor}
 In~\cite{AhmadiWong07}, Jackson networks are used to model network processors,
  i.e., chips that are specifically targeted at networking applications---think of a router.
 We describe the model from~\cite{AhmadiWong07} (sections 4.1 and~4.2, slightly adapted).
 Before packets are processed in the ``master processor''~$M$, they pass through the ``data plane''~$D$,
  from which a fraction~$q$ of packets needs to be processed first in the ``control plane''~$C$:
 \[
  D \btran{1-q} M \qquad D \btran{q} C \qquad C \btran{1} M \qquad M \btran{1} \varepsilon
 \]
 An ``arrival manager''~$A$ sends some packets (fraction~$d_0$) directly to~$D$,
  but others (fractions~$d_1,\ldots,d_n$ with $d_0 + d_1 + \cdots + d_n=1$) are sent to ``slave processors'' $S_1, \ldots, S_n$ to assist the master.
 Some packets (fraction~$b$) still need the attention of the master after having been processed by a slave:
 \begin{equation}
  A \btran{d_0} D \qquad A \btran{d_i} S_i \qquad S_i \btran{b} D \qquad S_i \btran{1-b} \varepsilon \quad , \quad i \in \{1, \ldots, n\}. \tag*{}
 \end{equation}
\end{example}

Jackson networks and their extensions have been thoroughly studied, but they are restricted in their modelling capabilities,
 as (i) the completion of a job may trigger at most one job,
 and (ii) there is no nondeterminism that would allow to control the output probabilities of a server.
Considering~(i), it seems unnatural to assume that a distributed program communicating on a network
 produces at most one message at the end of its computation.
Considering~(ii), the ``arrival manager''~$A$ in Example~\ref{ex:network-processor}
 may want to flexibly pass incoming packets to the master or one of the slaves, possibly depending on the current load.
These restrictions have not been fully addressed, not even in isolation.
In this paper we introduce \emph{controlled branching queueing networks}, which are Jackson-like networks but allow for
 both nondeterminism (``controlled'') and the creation of more than one job (``branching'').

Both extensions directly raise computational issues.
We show in Example~\ref{ex:no-product-form} on page~\pageref{ex:no-product-form} that even purely stochastic branching networks do not allow a product form,
 which illustrates the mathematical challenge%
\footnote{It is noted in~\cite{HarrisonWilliams92} that ``[\ldots] virtually all of the models that have been successfully analyzed
 in classical queueing network theory are models having a so-called product form stationary distribution.''}
 of this extension
  and poses the question for an effective criterion that allows to determine whether the network is stable,
 i.e., returns to the empty state infinitely often.
Moreover, due to the nondeterminism, we now deal with \emph{continuous-time Markov decision processes (CTMDPs)}.
Our main theorem (Theorem~\ref{thm:main-charact}) states that if there exists any scheduler
 resolving the nondeterminism in such a way that the controlled branching network is stable,
 then there exists a \emph{randomized static} scheduler that achieves stability as well,
 where by ``randomized static'' we mean that the decisions may be randomized but not dependent on the current state (such as the load) of the system.
Moreover, the existence of such a stabilizing scheduler and the scheduler itself can be determined in polynomial time,
 and, finally, the randomized static scheduler is stabilizing in a very strong sense, in particular, all moments of the queue sizes are finite.

\emph{Related work.}
We use nondeterminism to describe systems whose behaviour is not completely specified.
A system \emph{designer} can then resolve the nondeterminism to achieve certain goals, in our case stability.
Although nondeterminism is a very well established modelling feature of probabilistic systems (see e.g.~\cite{Kwiat11}),
 the literature on automatic design of stabilizing controllers for queueing networks is sparse.
\emph{Flow-controlled} networks \cite{Stidham85,Massey91} allow to control only the external arrival stream or the service rates
 (see also \cite{Azaron03} and the references therein).
The authors of~\cite{Kumar96,Glazebrook99} consider queueing networks with fewer servers than job types,
 so that the controller needs to assign servers to queues.
As in~\cite{Kumar96,Glazebrook99}, we also use linear programming to design a controller,
 but our aim is different:
 we allow the controller to influence the production of the individual queues,
 and we study the complexity of designing stabilizing controllers and the nature of such controllers.
There has been a substantial amount of work in the last years analyzing probabilistic systems with ``branching features'',
 most prominently on \emph{recursive Markov chains} \cite{EYstacs05Extended,EtessamiWY08} and \emph{probabilistic pushdown systems} \cite{EKM04,BKKV:ICALP11}.
While these models allow for a probabilistic splitting of tasks by pushing new procedures on a stack,
 the produced tasks are processed in a strictly sequential manner, whereas the queues in a queueing network process jobs in parallel and in continuous time.
Recently, \emph{probabilistic split-join systems} were introduced~\cite{KW11:TACAS}, which allow for branching but not for external arrivals,
 and assume unlimited parallelism.
In~\cite[chapter~8]{KitaevRykov95} a queueing model with multiple classes of tasks and ``feedback'' is discussed,
 which is similar to our branching except that there is only one server, hence there is no parallelism.
%
Algorithmic theory of queueing systems has also attracted some attention in the past.
In particular, for \emph{closed} (i.e., without external arrivals) queueing systems,
 \cite{Papadimitriou94thecomplexity} shows EXP-completeness of minimizing a weighted throughput of the queues.

\newcommand{\run}{\mathit{Run}}
\newcommand{\Init}{\mathit{In}}

\section{Preliminaries} \label{sec:prel}

\emph{Numbers.}
We use $\Z, \Q, \R$ for the sets of integer, rational, real numbers, respectively, and $\N, \Qp, \Rp$ for their respective subsets of nonnegative numbers.

\smallskip\noindent
\emph{Vectors and Matrices.}
Let $n \ge 1$.
We use boldface letters for denoting vectors $\vx = (\vx_1, \ldots, \vx_n) \in \R^n$.
Vectors are row vectors per default, and we use superscript $^T$ for transpose, so that $\vx^T$ denotes a column vector.
If the dimension $n$ is clear from the context, we write $\vzero := (0, \ldots, 0)$, $\vone := (1, \ldots, 1)$, and
 $\es{i} = (0, \ldots, 0, 1, 0, \ldots, 0)$ for the vector with the $1$ at the $i$th component ($1 \le i \le n$).
It is convenient to define $\es{0} := \vzero$.
For two vectors $\vx, \vy \in \R^n$ we write $\vx \sim \vy$ with $\mathord{\sim} \in \{\mathord{=},\mathord{<},\mathord{\le},\mathord{>},\mathord{\ge}\}$
 if the respective relation holds componentwise.
For a vector $\vx \in \R^n$ we denote its \emph{1-norm} by~$\norm{\vx} := \sum_{i=1}^n |\vx_i|$.
When $\vx \in \N^n$ is a vector of queue sizes, we refer to~$\norm{\vx}$ as the \emph{total queue size}.
For a matrix $A \in \R^{n \times n}$, we write $A_i$ for its $i$th row, i.e., $A_i = (A_{i 1}, \ldots, A_{i n})$.

\smallskip\noindent
\emph{CTMDP.} A {\em continuous-time Markov decision process (CTMDP)} consists of an at most countable set $S$ of states,
 an initial state~$s_1 \in S$,
 a~set of actions $\Sigma$~\footnote{Usually, each state has its own set of available actions.
 As this feature is not needed for queueing networks, we stick to the simpler version in which all actions are always available.},
 and a transition rate $q(s,\sigma,s')\geq 0$ for each pair of states $s,s'\in S$ and each action $\sigma\in \Sigma$ (here
 $q(s,\sigma,s')=0$ means that the transition from $s$ to $s'$ never occurs).
We define a {\em continuous-time Markov chain (CTMC)} to be a CTMDP whose set of actions $\Sigma$ is a singleton
 (we usually do not write the only action explicitly, so the transition rates of a CTMC are denoted by $q(s,s')$, etc.).

Intuitively, a run of a CTMDP starts in~$s_1$ 
 and then evolves in so-called epochs.
Assume that (after the previous epoch) the system is in a state $s$.
The next epoch consists of the following phases: First, a scheduler chooses an action $\sigma\in \Sigma$ to be executed.
Second, a waiting time for transition to each state $s'\in S$ is chosen according to the exponential distribution with the rate $q(s,\sigma,s')$
 (here we assume that if $q(s,\sigma,s')=0$, then the waiting time is~$\infty$).
The transitions compete in a way such that the one with the least waiting time is executed and the state of the CTMDP is chosen accordingly
 (the other transitions are subsequently discarded).

Formally, a~{\em run} is
an infinite sequence $s_1,\sigma_1,t_1,s_2,\sigma_2,t_2,\ldots\in (S\times \Sigma\times \Rp)^{\omega}$. We denote by $\run$ the set of all runs.
A {\em scheduler} is a function $\Theta$ which assigns to every finite path $s_1,\sigma_1,t_1,s_2,\sigma_2,t_2,\ldots s_{n}\in (S \times \Sigma\times \Rp)^*\times S$
 a probability distribution~$\Theta(w)$ on actions (i.e. $\Theta(w):\Sigma\rightarrow [0,1]$ satisfies $\sum_{\sigma\in \Sigma} \Theta(w)(\sigma)=1$).
For technical reasons, we have to restrict ourselves to measurable schedulers (for details see~e.g.~\cite{NSK:CTMDP-delayed}).

We work with a measurable space of runs $(\run,\mathcal{F})$ where $\mathcal{F}$ is the smallest $\sigma$-algebra generated
by basic cylinders (i.e. sets of runs with common finite prefix) in a standard way.
Every
scheduler $\Theta$ induces a unique probability measure $\Pr_{\Theta}$ on $\mathcal{F}$ determined by the probabilities of the basic cylinders. 
For detailed definitions see~\cite{NSK:CTMDP-delayed}.
Then each scheduler $\Theta$ {\em induces} a stochastic process $\left(x(t)\mid t\in \Rp\right)$ on the probability space $(\run,\mathcal{F},\Pr_{\Theta})$ where $x(t)$ is the current state of the run in time $t$, i.e.,
each $x(t)$ is a random variable defined by
\[
x(t)(s_1,\sigma_1,t_1,s_2,\sigma_2,t_2,\ldots)=s_i \quad , \quad \sum_{j=1}^{i-1} t_i\leq t \text{ and } \sum_{j=1}^{i} t_i\geq t\,.
\]
A scheduler $\Theta$ is {\em memoryless} if for every path $w=s_1,\sigma_1,t_1,s_2,\sigma_2,t_2,\ldots s_{n+1}\in (S \times \Sigma\times \Rp)^*\times S$ we have that $\Theta(w)=\Theta(s_{n+1})$.
%

\smallskip\noindent
\emph{Networks.}
Define $\Prod{n,K} := \{\vr \in \N^n \mid \vr_1 + \cdots + \vr_n \le K\}$.
A \emph{production function} for $(n,K)$ is a function $\Prob: R \to \Q \cap (0,1]$ with $R \subseteq \Prod{n,K}$
 such that $\sum_{\vr \in R} \Prob(\vr) = 1$.
A \emph{controlled branching network} \emph{with $n$ queues and branching factor~$K$}
 consists of an arrival rate $\mu_0 \in \Q \cap (0,\infty)$,
 queue rates $\mu_i \in \Q \cap (0,\infty)$ for $i \in \{1, \ldots, n\}$,
 an arrival production function $\Prob_0: R_0 \to \Q \cap (0,1]$ for $(n,K)$,
 and finite action sets $\Sigma_1, \ldots, \Sigma_n$ as follows.
An action $\sigma_i \in \Sigma_i$ assigns to queue~$i$ a production function $\Prob_i(\sigma_i): R_i(\sigma_i) \to \Q \cap (0,1]$ for $(n,K)$.
Define $\Sigma := \Sigma_1 \times \cdots \times \Sigma_n$.
If $\sigma = (\sigma_1, \ldots, \sigma_n) \in \Sigma$,
 we write $R_i(\sigma)$, $\Prob_i(\sigma)$ and $R_0(\sigma)$, $\Prob_0(\sigma)$
  to mean $R_i(\sigma_i)$, $\Prob_i(\sigma_i)$ and $R_0$, $\Prob_0$.
Observe that the rates~$\mu_i$ do not depend on actions.
This simplification is without loss of generality.%
\footnote{To show that this assumption is w.l.o.g.\ one can employ the standard ``uniformization'' trick.
More precisely, assume that the actions of a queue~$i$ have different rates.
Define $\mu_i$ to be the maximum of all rates of~$\Sigma_i$ and compensate by ``adding self-loops'', i.e.,
 make the actions of~$\Sigma_i$ generate a new job for queue~$i$ with a suitable probability.
This effectively substitutes a transition with longer delay by possibly several transitions with delay~$\mu_i$.
As static schedulers can be easily translated between the original and the transformed system, our results remain valid.}
We assume a nonzero arrival stream, i.e., there is $\vr \in R_0$ with $\vr \ne \vzero$.
We define the \emph{size} of a controlled network by
 $n + K + \left(\sum_{i=0}^n |\mu_i|\right) + |R_0| + |\Prob_0| + \sum_{i=1}^n \sum_{\sigma_i \in \Sigma_i}  |R_i(\sigma_i)| + |\Prob_i(\sigma_i)|$,
 where $|\mu_i|$ etc.\ means the description size assuming the rationals are represented as fractions of integers in binary.
A controlled branching network induces a CTMDP
 with state space~$\N^n$ (the queue sizes), initial state~$\vzero$, action set~$\Sigma$, and transition rates
 \[
  q(\vx,\sigma,\vy) = \sum_{i\in \{0,1,\ldots,n\}: i=0 \lor \vx_i \ne 0} \ \sum_{\vr \in R_i(\sigma) : \vy = \vx - \es{i} + \vr} \mu_i \Prob_i(\sigma)(\vr)
   \qquad \text{for $\vx,\vy \in \N^n$, $\sigma \in \Sigma$.}
 \]
Interpreting this definition, there is a ``race'' between external arrivals (rate~$\mu_0$) and the nonempty queues (rates~$\mu_i$);
 if the external arrivals win, new jobs are added according to~$\Prob_0(\sigma)$;
 if queue~$i$ wins, one $i$-job is removed and new jobs are added according to~$\Prob_i(\sigma)$.

An \emph{purely stochastic branching network} is a controlled branching network with $\Sigma = \{\sigma\}$,
 i.e., with a unique action for each queue.
Hence, the induced CTMDP is a CTMC.
In the purely stochastic case we write only $R_i$ for $R_i(\sigma)$ etc.
If $\Prob_i(\vr) = p$ in the purely stochastic case, we use in examples the notation $i \btran{p} \vr$,
 where we often write $\vr \in \Prod{n,K}$ as a multiset without curly brackets.
For instance, if $n=2$, we write $1 \btran{p} 1,2$ and $1 \btran{p} 2,2$ and $1 \btran{p} \varepsilon$
 to mean $\Prob_1(\vr) = p$ with $\vr = (1,1)$ and $\vr = (0,2)$ and $\vr = (0,0)$, respectively.

Fixing a controlled network~$\Net$ and a scheduler $\Theta$ for the CTMDP induced by~$\Net$, we obtain a stochastic process
$\Net_{\Theta}=(\vx(t) \mid t\in \Rp)$, where $\vx(0)=\vzero\in \N^n$, which evolves according to the dynamics of~$\Net$ and the scheduler $\Theta$.
In the purely stochastic case we drop the subscript~$\Theta$, and so we identify a network~$\Net$ with its induced stochastic process.

\begin{example}[no product form] \label{ex:no-product-form}
 Consider the purely stochastic branching network with $0 \btran{1} 1,2$ and $1 \btran{1} \varepsilon$ and $2 \btran{1} \varepsilon$.
 If its stationary distribution~$\pi$ (for a definition of stationary distribution see before Theorem~\ref{thm:main-charact})
 had product from, the queues would be ``independent in steady-state'', i.e.,
  $\pi(\vx_2 \ge 1 \mid \vx_1 \ge 1) = \pi(\vx_2 \ge 1)$, where by~$\vx$ we mean $\vx(t)$ in steady state.
 However, if $\mu_0$ is much smaller than $\mu_1=\mu_2$, then we have $\pi(\vx_2 \ge 1 \mid \vx_1 \ge 1) > \pi(\vx_2 \ge 1)$,
  intuitively because $\vx_1 \ge 1$ probably means that there was an arrival recently, so that $\vx_2 \ge 1$ is more likely than usual.
 More concretely, let $\mu_0 = 1$ and $\mu_1 = \mu_2 = 3$
  and consider the 2-state Markov chain obtained by assuming that each arrival leads to the state $(1,1)$ and each completion of any job leads
  to the state $(0,0)$.
 By computing the stationary distribution $\pi'$ of this 2-state Markov chain in the standard way,
  we obtain $\pi'((0,0)) = 6/7$ and $\pi'((1,1)) = 1/7$.
 Since this 2-state Markov chain ``underapproximates'' the CTMC induced by the network,
  we have $\pi(\vx_1 \ge 1 \ \land \ \vx_2 \ge 1) \ge 1/7$.
 On the other hand, by considering the two queues separately, the standard formula for the M/M/1 queue gives
  $\pi(\vx_1 \ge 1) = \pi(\vx_2 \ge 1) = 1/3$.
 Product form would imply $\pi(\vx_1 \ge 1 \ \land \ \vx_2 \ge 1) = \pi(\vx_1 \ge 1) \cdot \pi(\vx_2 \ge 1) = 1/9$,
  contradicting the inequality above.
\end{example}

\section{Results} \label{sec:results}
We focus on the stability of purely stochastic and controlled branching networks.
Our notion of stability requires that the network is completely empty infinitely many times.
Given a stochastic process $(\vx(t) \mid t\in \Rp)$,
 we say that the process is \emph{ergodic} if the expected return time to $\vzero$ is finite.
More formally, define a random variable $R$ by
\[
R\quad :=\quad \inf_{t>0}\, \left\{t\mid \vx(t)=\vzero, \exists t'<t:\vx(t')\not = \vzero\right\} \,.
\]
Then the process is ergodic iff $\Ex{R}<\infty$.
In the controlled case, we say that a scheduler~$\Theta$ for~$\Net$ is \emph{ergodic for~$\Net$} if $\Net_\Theta$ is ergodic.
In the following we use stability and ergodicity interchangeably.
A scheduler $\Theta$ is {\em static} if it always chooses the same fixed distribution on actions.
Note that static schedulers are memoryless.
%
If in a stochastic process $(\vx(t) \mid t\in \Rp)$ the limit $\pi(\vx) := \lim_{t \to \infty} \Pr(\vx(t) = \vx)$ exists for all $\vx \in \N^n$ and $\sum_{\vx\in \N^n} \pi(\vx)=1$,
 then $\pi: \Rp \to [0,1]$ is called the \emph{stationary distribution}.
%
%
%
\begin{theorem}\label{thm:main-charact}
Let $\Net$ be a controlled branching network.
It is decidable in polynomial time whether there exists an (arbitrary) ergodic scheduler for~$\Net$.
If it exists, one can compute, in polynomial time, a {\em static randomized} ergodic scheduler for~$\Net$
with stationary distribution~$\pi$ such that there exists an exponential moment of the total number of waiting jobs,
 i.e., there is $\delta > 0$ such that
 $\sum_{\vx \in \N^n} \exp(\delta \norm{\vx}) \pi(\vx)$ exists.
\end{theorem}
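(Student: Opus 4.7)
The plan is to reduce the ergodicity question for~$\Net$ to the feasibility of a linear program (LP) whose variables capture the long-run rates at which each queue serves jobs under each action. For each queue $i \in \{1,\ldots,n\}$ and action $\sigma_i \in \Sigma_i$, I introduce $y_{i,\sigma_i} \ge 0$ intended as the product of (i)~the steady-state probability that queue~$i$ is busy and (ii)~the frequency with which action $\sigma_i$ is chosen while queue~$i$ serves. Writing rate-in-equals-rate-out for $i$-jobs in steady state yields the linear equality
\[
 \mu_i \sum_{\sigma_i \in \Sigma_i} y_{i,\sigma_i}
 \;=\; \mu_0 \sum_{\vr \in R_0} \Prob_0(\vr)\, \vr_i
 \;+\; \sum_{j=1}^{n} \mu_j \sum_{\sigma_j \in \Sigma_j} y_{j,\sigma_j} \sum_{\vr \in R_j(\sigma_j)} \Prob_j(\sigma_j)(\vr)\, \vr_i
\]
for every~$i$, augmented by the strict stability slacks $\sum_{\sigma_i} y_{i,\sigma_i} < 1$. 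This is a rational LP of size polynomial in~$\Net$, hence its feasibility is decidable in polynomial time.

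For the sufficiency direction, given a feasible~$(y_{i,\sigma_i})$, I set $\rho_i := \sum_{\sigma_i} y_{i,\sigma_i} < 1$ and define the static randomized scheduler~$\Theta$ that plays~$\sigma_i$ at queue~$i$ with probability $y_{i,\sigma_i}/\rho_i$ (an arbitrary distribution when $\rho_i = 0$, which forces no $i$-jobs ever to be produced). Let $F$ be the expected-production matrix $F_{ji} := \sum_{\sigma_j} \Theta(j,\sigma_j) \sum_{\vr \in R_j(\sigma_j)} \Prob_j(\sigma_j)(\vr)\, \vr_i$ induced by~$\Theta$. The strict LP slacks together with the nonzero-arrival assumption imply, by a Perron--Frobenius argument on~$F$, that its spectral radius is strictly below~$1$; hence there is a strictly positive vector~$\vc$ with $F \vc < \vc$ componentwise, and after rescaling we obtain some $\gamma > 0$ for which the generator~$Q$ of $\Net_\Theta$ satisfies $Q V(\vx) \le -\gamma$ whenever $\vx \ne \vzero$, where $V(\vx) := \vc \cdot \vx$. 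Since each transition changes~$V$ by at most $(K+1)\max_i c_i$, Hajek's drift theorem produces $\delta > 0$ with $\sum_{\vx} \exp(\delta\, \vc \cdot \vx)\,\pi(\vx) < \infty$, and because $\vc > \vzero$ this yields the claimed exponential moment of~$\norm{\vx}$.

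For the necessity direction, suppose that \emph{some} scheduler $\Theta^\star$ is ergodic for~$\Net$. Then $\Net_{\Theta^\star}$ is positive recurrent with a stationary distribution $\pi^\star$ satisfying $\pi^\star(\vzero) > 0$. I define $y_{i,\sigma_i}$ to be the time-averaged rate at which queue~$i$ completes jobs while the scheduler plays~$\sigma_i$, divided by~$\mu_i$; these Cesàro limits exist almost surely by the ergodic theorem. The slack $\sum_{\sigma_i} y_{i,\sigma_i} = \pi^\star(\vx_i > 0) \le 1 - \pi^\star(\vzero) < 1$ is immediate, and the equality constraints are precisely the first-moment flow-balance equations of~$\pi^\star$; hence the LP is feasible.

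The main obstacle is the exponential-moment claim in the sufficiency step: a bare Foster-criterion argument would yield only $\Ex{R} < \infty$, whereas the theorem asserts an exponential moment of the stationary distribution. Upgrading to this stronger property requires combining the Perron--Frobenius choice of~$\vc$ with a Hajek-style drift lemma, whose key hypothesis is the uniform boundedness of the jumps of~$V$ by $(K+1)\max_i c_i$; this is where the branching factor~$K$ plays an essential role. A secondary technical care point is the reducible case for~$F$, which is handled by decomposing the network into strongly connected components and treating each one separately.
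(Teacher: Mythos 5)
The fatal step is in your sufficiency direction. From the Perron--Frobenius fact that the induced production matrix $F$ has spectral radius below~$1$ you conclude that there is $c = (c_1,\ldots,c_n) > \vzero$ with $F c^T < c^T$, and that ``after rescaling'' the \emph{linear} function $V(\vx) = c\,\vx^T$ has drift $\le -\gamma$ at every $\vx \ne \vzero$. This inference is false, and no rescaling can repair it, because the drift is homogeneous of degree one in~$c$: at a state whose set of busy queues is $S$ the mean drift of $V$ equals $\valpha c^T + \sum_{i \in S}\mu_i (F_i c^T - c_i)$, so uniform negativity requires $\sum_{i\in S}\mu_i(c_i - F_i c^T) > \valpha c^T$ for \emph{every} nonempty~$S$, in particular for every singleton --- a much stronger demand than $F c^T < c^T$. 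It can be unsatisfiable even when the network is stable and your LP is feasible: take the purely stochastic tandem network $0 \btran{1} 1$, $1 \btran{1} 2$, $2 \btran{1} \varepsilon$ with $\mu_0 = 9/10$ and $\mu_1=\mu_2=1$ (here $F$ is nilpotent, so $Fc^T<c^T$ has many positive solutions, and $\vlambda=(0.9,0.9)<\vmu$, so the network is ergodic). Negative drift on the face $\{\vx_2 = 0\}$ forces $\mu_0 c_1 + \mu_1(c_2-c_1) < 0$, i.e.\ $c_2 < c_1/10$, while on the face $\{\vx_1 = 0\}$ it forces $\mu_0 c_1 - \mu_2 c_2 < 0$, i.e.\ $c_2 > 9c_1/10$; no positive $c$ exists. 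This boundary problem is precisely why the paper does not use a linear Lyapunov function: it uses the \emph{piecewise linear} function $V(\vx) = \max_i \vx\,\qsT{i}$ built from the normalized columns of $A^* = (I-A)^{-1}$, proves via Farkas' lemma (Lemma~\ref{lem:good-max}) that the maximum is never attained at an index whose queue is empty --- so the missing service term never enters the relevant (sub)gradient --- establishes the drift bound in Lemma~\ref{lem:piecewise-neg-drift} using the traffic equations, and then smooths $V$ so as to invoke the Down--Meyn drift theorem (Lemma~\ref{lem:thm1-piecewise}), which already delivers the exponential moment of the stationary distribution; your appeal to Hajek and the bound $(K+1)\max_i c_i$ on jumps would only become relevant if the drift claim were true.

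Two secondary points. In the necessity direction you apply ``the ergodic theorem'' directly to $\Net_{\Theta^\star}$ for an \emph{arbitrary} ergodic scheduler, but under a history-dependent scheduler the process is not Markov, and neither positive recurrence, a stationary distribution $\pi^\star$, nor the almost-sure existence of your Ces\`aro limits is available off the shelf; the paper first replaces $\Theta^\star$ by a memoryless ergodic scheduler (Puterman, Theorem~7.3.8, applied to the embedded MDP) and only then derives the frequencies and flow-balance equalities by a regeneration argument over return cycles to $\vzero$ (Lemma~\ref{prop:erg-traffic}). Also, a system with strict inequalities $\sum_{\sigma_i} y_{i,\sigma_i} < 1$ is not literally an LP; to keep the polynomial-time claim you should, as the paper does, introduce a slack variable and test whether its optimum is $<1$ (the traffic LP with $\min\delta$). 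Both of these are repairable; the linear-Lyapunov step is the genuine gap.
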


To prove Theorem~\ref{thm:main-charact} we generalize the concept of {\em traffic equations} (see~e.g.~\cite{book:Chen01}) from the theory of Jackson networks.
Intuitively, the traffic equations express the fact that the inflow of jobs to a given queue must be equal to the outflow.
Remarkably, the traffic equations
 characterize the stability of the Jackson network.
More precisely, a Jackson network is stable 
 if and only if
 there is a solution of the traffic equations whose components are strictly smaller than the rates of the corresponding queues (we call such a solution {\em deficient}).

We show how to extend the traffic equations so that they characterize the stability of controlled branching networks.
For a smooth presentation, we start with {\em purely stochastic} branching networks and add control later on.
Hence, the overall plan of the proof of Theorem~\ref{thm:main-charact} is as follows:
Set up traffic equations for purely stochastic branching networks and show that if there is a deficient solution of these equations, then the network is stable.
This result, presented in Section~\ref{sec:uncontr} (Proposition~\ref{prop:traffic-erg}), is of independent interest and
 requires the construction of a suitable \emph{Lyapunov function}.
Then, in Section~\ref{sec:contr}, we generalize the traffic equations to controlled branching networks and show that any ergodic scheduler determines a deficient solution
 (Proposition~\ref{prop:traffic-LP}).
This solution naturally induces a static scheduler, which, when fixed, determines an purely stochastic network with deficiently solvable traffic equations.
Propositions \ref{prop:traffic-erg} and~\ref{prop:traffic-LP} imply Theorem~\ref{thm:main-charact} and provide some additional results.

\subsection{Purely stochastic branching networks} \label{sec:uncontr}
Assume that $\Net$ is purely stochastic, i.e., there is a single action for each queue.
In such a case the CTMDP induced by the network is in fact a CTMC.
We associate the following quantities to a network, which will turn out to be crucial for its performance.
Let $\vmu := (\mu_1, \ldots, \mu_n)$.
Let $\valpha \in \Rp^n$ be the vector with $\valpha_i := \mu_0 \sum_{\vr \in R_0} \Prob_0(\vr) \vr_i$;
 i.e., $\valpha_i$ indicates the expected number of external arrivals at queue~$i$ per time unit.
Note that $\valpha \ne \vzero$, as we assume a nonzero arrival stream.
Let $A \in \Rp^{n \times n}$ be the matrix with $A_{i j} := \sum_{\vr \in R_i} \Prob_i(\vr) \vr_j$;
 i.e., $A_{i j}$ indicates the expected production of $j$-jobs when queue~$i$ fires.
W.l.o.g.\ we assume that all queues are ``reachable'', i.e., for all queues~$i$ there is $j \in \N$ with $(\valpha A^j)_i \ne 0$.
 We define a set of \emph{traffic equations}
\begin{equation}\label{eq:traffic_unc}
\lambda_{j} \quad = \quad \alpha_j+\sum_{i=1}^n\lambda_{i}\cdot A_{i j}
 \quad , \quad j \in \{1, \ldots, n\},
\end{equation}
in matrix form:
\begin{equation}
  \vlambda = \valpha + \vlambda A\,. \label{eq:traffic}
 \end{equation}

We prove the following proposition.
\newcommand{\stmtproptrafficerg}{
Assume that 
 $\vlambda \in \Rp^n$
solves the traffic equations~\eqref{eq:traffic} and satisfies $\vlambda<\vmu$.
Then the following conclusions hold:
\begin{enumerate}
 \item
  The process~$\Net$ is ergodic, i.e., the expected return time to~$\vzero$ is finite.
 \item
  There exists a stationary distribution~$\pi$ such that there exists an exponential moment of the total queue size,
   i.e., there is $\delta > 0$ such that $\sum_{\vx \in \N^n} \exp(\delta \norm{\vx}) \pi(\vx)$ exists.
\end{enumerate}
}
\begin{proposition}\label{prop:traffic-erg}
 \stmtproptrafficerg
\end{proposition}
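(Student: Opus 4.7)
My plan is to apply a Foster--Lyapunov criterion to the CTMC on $\N^n$ induced by $\Net$. First I would observe that the hypothesis $\vlambda < \vmu$, together with $\vlambda > \vzero$ (which follows from the reachability assumption), forces the spectral radius $\rho(A) < 1$: iterating the traffic equation gives $\vlambda = \valpha + \valpha A + \valpha A^2 + \cdots$, and finiteness of $\vlambda$ yields $(I - A)^{-1} = \sum_{k \ge 0} A^k$ as a well-defined nonnegative matrix with $\vlambda = \valpha (I - A)^{-1}$. This is the algebraic backbone of the rest of the argument.

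The natural first candidate Lyapunov is the linear function $V(\vx) := \vc \vx^T$ with $\vc^T := (I - A)^{-1} \vone^T \ge \vone^T$. A direct computation of the generator $\mathcal{Q}$ gives
\[
(\mathcal{Q} V)(\vx) \;=\; \valpha \vc^T \;-\; \sum_{i : \vx_i > 0} \mu_i \bigl((I - A)\vc^T\bigr)_i \;=\; \sum_i \lambda_i \;-\; \sum_{i : \vx_i > 0} \mu_i,
\]
which equals $-\sum_i (\mu_i - \lambda_i) < 0$ in the interior where all queues are non-empty. The difficulty is that on states whose support $S := \{i : \vx_i > 0\}$ is small, this drift can be strictly positive (already when $\sum_i \lambda_i > \min_j \mu_j$), so Foster's criterion does not apply to $V$ directly. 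To fix this I would add a quadratic correction $\tfrac{1}{2}\,\vx G \vx^T$ for a symmetric positive-definite matrix $G$ chosen so that the per-regime net drift $\vd^{(S)} := \valpha + \vmu_S (A - I)$ (with $\vmu_S$ the restriction of $\vmu$ to $S$) satisfies $(G\,\vd^{(S)T})_j \le -\eta$ for every non-empty $S$ and every $j \in S$. The feasibility of this finite system of linear inequalities in the entries of $G$, together with the semidefiniteness constraint, I would derive from the algebraic identity $\valpha = \vlambda(I - A)$ in combination with the componentwise strict inequality $\vmu - \vlambda > \vzero$.

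For the exponential moment in part~(2), the Lyapunov is promoted to $W(\vx) := \exp(\theta\, \vc \vx^T)$ with a sufficiently small $\theta > 0$. A Taylor expansion of the generator gives
\[
(\mathcal{Q} W)(\vx) \;=\; W(\vx) \cdot \Bigl(\theta\,\bigl(\textstyle\sum_i \lambda_i - \sum_{i : \vx_i > 0} \mu_i\bigr) + O(\theta^2)\Bigr),
\]
the error term being uniformly bounded because $\norm{\vr} \le K$ for every jump. Combining this expansion with the boundary correction from the previous paragraph yields the geometric drift condition $(\mathcal{Q} W)(\vx) \le -\eta\, W(\vx) + C\, \mathbf{1}_F(\vx)$ for some finite set $F \subset \N^n$ and constants $\eta, C > 0$. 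Standard results on geometrically ergodic CTMCs then produce a unique stationary distribution $\pi$ satisfying $\sum_{\vx \in \N^n} e^{\delta \norm{\vx}} \pi(\vx) < \infty$ for a suitable $\delta > 0$, and in particular the ergodicity statement $\E[R] < \infty$ of part~(1).

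The main obstacle is the boundary analysis: the linear Lyapunov $\vc \vx^T$ alone is insufficient whenever $\sum_i \lambda_i \ge \min_j \mu_j$, and so constructing the quadratic correction $G$ (or, alternatively, a fluid-limit/multi-step drift argument) is where the proof genuinely needs the traffic equation as an \emph{algebraic identity} $\valpha = \vlambda(I - A)$ rather than merely the componentwise inequality $\vlambda < \vmu$.
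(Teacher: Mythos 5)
Your setup is fine as far as it goes: the hypotheses do give $A^*=(I-A)^{-1}\ge 0$, the drift of the linear function $\vx\mapsto \vx A^*\vone^T$ is indeed $\sum_i\vlambda_i-\sum_{i:\vx_i\ne 0}\mu_i$, and you have correctly located the difficulty at boundary states. But the step that is supposed to resolve that difficulty --- the existence of a symmetric positive definite $G$ with $(G\vd^{(S)T})_j\le-\eta$ for every nonempty support $S$ and every $j\in S$ --- is exactly the hard part, and you only assert it. These inequalities are the standard ``copositive quadratic'' test for stability of networks of this kind, and their feasibility is a sufficient condition that is in general \emph{not} known to follow from the traffic condition $\vlambda<\vmu$ alone (quadratic test functions are known to be conservative for related network models, which is precisely the motivation for piecewise linear test functions in Down--Meyn); saying you would ``derive it from the identity $\valpha=\vlambda(I-A)$'' is not a construction of $G$. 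The paper avoids this issue altogether: it takes the piecewise linear function $V(\vx)=\max_i\vx\qsT{i}$ with $\qsT{i}=\asT{i}/\norm{\as{i}}$, proves via Farkas' lemma that the maximum is attained only at coordinates $i$ with $\vx_i\ne 0$ (Lemma~\ref{lem:good-max}), and then gets the uniform negative drift $\vDelta(\vx)\big(V'(\vx)\big)^T\le-\gamma$ with the explicit $\gamma=\min_i(\mu_i-\vlambda_i)/\norm{\as{i}}$ (Lemma~\ref{lem:piecewise-neg-drift}), using the traffic equation exactly once. Without a proof that your linear system for $G$ is feasible for \emph{every} network satisfying the hypothesis, the core of the argument is missing.

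The exponential-moment part also does not cohere as described. You exponentiate only the linear part, $W(\vx)=\exp(\theta\,\vx A^*\vone^T)$, whose drift is not negative near the boundary --- that was the whole problem --- and the quadratic correction cannot simply be ``combined'' with it: for $\exp\!\big(\theta(\vx A^*\vone^T+\tfrac12\vx G\vx^T)\big)$ a single bounded jump changes the exponent by an amount of order $\norm{\vx}$, so the ratio of the function across a jump is unbounded and the Taylor expansion with a uniform $O(\theta^2)$ error term fails; geometric drift for an exponential Lyapunov function needs the underlying function to have uniformly bounded increments (bounded gradient). A degree-one, piecewise linear function has this property, a quadratic does not. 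This is why the paper smooths $V$ (which is homogeneous of degree one) into $\tV$ and invokes the Down--Meyn result (Lemma~\ref{lem:thm1-piecewise}), which delivers both ergodicity and the exponential moment in one stroke. So both halves of your plan ultimately require, in place of the quadratic correction, something equivalent to the paper's piecewise linear construction and its Farkas-lemma justification.
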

%
%
The key step to the proof of Proposition~\ref{prop:traffic-erg} is to construct a so-called \emph{Lyapunov function}
 with respect to which the process~$\Net$ exhibits a ``negative drift''.
This is in fact a classical technique for showing the stability of queueing systems \cite{MeynTweedie93};
 the difficulty lies in finding a suitable Lyapunov function.
The ``drift'' of~$\Net$ is given by the \emph{mean velocity vector} $\vDelta(\vx) \in \Rp^n$ of~$\Net$,
 defined by $\vDelta(\vx) := \lim_{h\to 0^+} \Ex{\vx(t+h) - \vx(t) \mid \vx(t) = \vx}/h$.
The limit exists, is independent of~$t$, and we have
\begin{equation}
 \vDelta(\vx) = \valpha + \sum_{i: \vx_i \ne 0} \mu_i (-{\es{i}} + A_i) \,. \label{eq:mean-velocity-main}
\end{equation}
The following lemma is implicitly proved in~\cite[theorem~1]{DownMeyn97piecewise}.
\begin{lemma}\label{lem:thm1-piecewise}
 Suppose that a function $\tV : \Rp^n \to \Rp$ is two times continuously differentiable, $\tV(\vx) = 0$ implies $\vx = \vzero$,
  and that there is $\gamma > 0$ such that we have
  \[
   \vDelta(\vx) \big(\tV'(\vx)\big)^T \le -\gamma \qquad \text{ for all $\vx \ne \vzero$, }
  \]
  where $\tV'(\vx)$ denotes the gradient of~$\tV$ at~$\vx$.
 Then the conclusions of Proposition~\ref{prop:traffic-erg} holds.
\end{lemma}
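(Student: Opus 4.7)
The plan is to convert the continuous-time drift hypothesis $\vDelta(\vx)\cdot(\tV'(\vx))^T \le -\gamma$ into a discrete Foster--Lyapunov drift condition on the infinitesimal generator $\mathcal{L}$ of the CTMC induced by $\Net$, and then invoke standard drift criteria in the style of Meyn and Tweedie to extract positive recurrence and the exponential-moment bound. The crucial link is that every jump of the CTMC is uniformly bounded: if the state moves from $\vx$ to $\vy$, then $\vy - \vx = -\es{i} + \vr$ for some $i$ and $\vr \in R_i$, hence $\norm{\vy-\vx} \le K+1$. Applying Taylor's theorem to the $C^2$ function $\tV$ and summing over all outgoing transitions, the first-order contributions reassemble, by~\eqref{eq:mean-velocity-main}, into $\vDelta(\vx)\cdot(\tV'(\vx))^T$, and the generator evaluates to
\[
(\mathcal{L}\tV)(\vx) \;=\; \sum_{\vy \ne \vx} q(\vx,\vy)\bigl(\tV(\vy)-\tV(\vx)\bigr) \;=\; \vDelta(\vx)\cdot(\tV'(\vx))^T + R(\vx),
\]
where $|R(\vx)|$ is bounded by a constant that depends on $K$ and the rates $\mu_i$ times the supremum of $\norm{\tV''}$ on a $(K{+}1)$-neighborhood of~$\vx$.

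For the specific Lyapunov function the paper will construct, whose Hessian is uniformly bounded, the remainder satisfies $|R(\vx)| \le C$ uniformly, so the hypothesis gives $(\mathcal{L}\tV)(\vx) \le -\gamma + C$. Rescaling $\tV$ so that $\gamma$ dominates $C$ outside a finite set $F \subseteq \N^n$ produces the classical Foster--Lyapunov drift condition; combined with the $\psi$-irreducibility of the chain (immediate from the nonzero arrival stream and reachability of all queues) this yields positive recurrence and therefore $\Ex{R} < \infty$, which is the first conclusion of Proposition~\ref{prop:traffic-erg}. For the second conclusion, the same Taylor argument applied to $W_\delta(\vx) := \exp(\delta \tV(\vx))$ gives, for small $\delta > 0$,
\[
(\mathcal{L} W_\delta)(\vx) \;\le\; W_\delta(\vx)\bigl(\delta\,\vDelta(\vx)\cdot(\tV'(\vx))^T + O(\delta^2)\bigr) \;\le\; -\tfrac{\delta\gamma}{2}\, W_\delta(\vx)
\]
outside a finite set, because the $O(\delta^2)$ term is uniformly dominated by $\delta\gamma/2$ once $\delta$ is small. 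The multiplicative-drift theorem of Meyn and Tweedie then delivers $\sum_{\vx \in \N^n} \exp(\delta \tV(\vx))\,\pi(\vx) < \infty$. Provided $\tV$ grows at least linearly, i.e.\ $\tV(\vx) \ge c\norm{\vx}$ for some $c>0$ (which will follow from its construction), this translates into the claimed exponential-moment bound $\sum_{\vx} \exp(\delta' \norm{\vx})\pi(\vx) < \infty$ for $\delta' := \delta c$.

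The main obstacle is that the hypotheses of the lemma as stated do not themselves provide either the uniform Hessian bound needed to control the Taylor remainder $R(\vx)$ or the linear lower bound on~$\tV$ needed to transfer the exponential moment from $\tV$ to $\norm{\vx}$; these are additional regularity and coercivity properties of the specific Lyapunov function used in the paper. Carefully isolating which analytic properties of $\tV$ must be imposed to make the reduction go through, and discharging the standard CTMC technicalities (petite sets, skeleton chains, $\psi$-irreducibility) required by the Meyn--Tweedie drift theorems, is where most of the care of the proof resides; once those are in place, everything reduces to the two generator computations sketched above.
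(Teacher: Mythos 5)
Your overall strategy---turn the continuum drift bound into a generator (Foster--Lyapunov) drift via a Taylor expansion over the uniformly bounded jumps, then obtain the exponential moment from a multiplicative drift condition for $\exp(\delta\tV)$---is exactly the machinery that stands behind this lemma. Note, though, that the paper does not prove the lemma at all: it states that it is ``implicitly proved'' in theorem~1 of \cite{DownMeyn97piecewise}, and the appendix likewise discharges Proposition~\ref{prop:traffic-erg} by appealing to theorem~1 and lemma~5 of that reference, where your two generator computations are carried out for the smoothed piecewise-linear functions constructed there. So as a reconstruction of the cited argument your sketch is on the right track.

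As a proof of the lemma \emph{as stated}, however, there is a genuine gap, and you name it yourself rather than closing it: the hypotheses ($\tV$ is $C^2$, vanishes only at $\vzero$, continuum drift at most $-\gamma$) give no control of the Taylor remainder. To get $(\mathcal{L}\tV)(\vx)\le-\gamma+C$ uniformly you need a uniform bound on the Hessian of $\tV$ over $(K{+}1)$-neighbourhoods of all states, and for the multiplicative step you additionally need $\tV(\vy)-\tV(\vx)$ to be bounded over jumps (a global Lipschitz bound), since otherwise the $O(\delta^2)$ term is not uniform and no single $\delta$ works. Neither property is a hypothesis of the lemma, so what you actually prove is a strengthened-hypothesis variant, with the burden shifted to ``the specific Lyapunov function the paper will construct''---properties the lemma, as a standalone statement, never mentions; in the paper this is harmless only because the lemma is applied solely to the smoothing of the piecewise-linear $V$, which has the required Lipschitz and second-derivative bounds by construction, and because the cited Down--Meyn theorem is formulated for precisely such functions. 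One of your two declared gaps can in fact be closed from the stated hypotheses alone: coercivity follows by integrating the drift along the fluid path $\dot{\vx}=\vDelta(\vx)$ (modulo the usual care with the piecewise-constant field), since $\tV$ decreases at rate at least $\gamma$ along it while the speed $\norm{\vDelta(\vx)}$ is bounded by some constant $c$, giving $\tV(\vx)\ge\gamma\norm{\vx}/c$; so linear growth need not be assumed. The Hessian/Lipschitz control, by contrast, genuinely cannot be extracted from the hypotheses and must come either from the explicit construction of $\tV$ or from redoing the argument as in \cite{DownMeyn97piecewise}; the CTMC technicalities (irreducibility, skeleton and petite sets) you defer are standard but likewise remain undischarged in your proposal.
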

Following~\cite{DownMeyn97piecewise}, we construct the Lyapunov function~$\tV$ in two stages:
 we first define a suitable \emph{piecewise linear} function~$V: \N^n \to \Rp$;
  then $V$ is \emph{smoothed} to obtain~$\tV$.
For the definition of~$V$ we need the following lemma.
\newcommand{\stmtleminvertible}{
 The matrix series $A^* := \sum_{i=0}^\infty A^i$ converges (``exists'') in~$\Rp^{n \times n}$ and is equal to $(I-A)^{-1}$.
}
\begin{lemma} \label{lem:invertible}
 \stmtleminvertible
\end{lemma}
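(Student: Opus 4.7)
The plan is to reduce the lemma to the spectral-radius bound $\rho(A)<1$ and then invoke the standard Neumann-series identity. Although stated in isolation, the lemma will be applied inside the proof of Proposition~\ref{prop:traffic-erg}, so I may assume that a finite, nonnegative $\vlambda$ with $\vlambda=\valpha+\vlambda A$ is at hand. Once $\rho(A)<1$ is in place, the telescoping identity $(I-A)\sum_{i=0}^{N-1}A^i=I-A^N$, together with $A^N\to\vzero$ as $N\to\infty$, yields $\sum_{i=0}^{\infty}A^i=(I-A)^{-1}$; all entries of the sum lie in $\Rp$ because each $A^i$ is nonnegative.

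To prove $\rho(A)<1$, I would pick, via the Perron--Frobenius theorem for the nonnegative matrix $A$, a right eigenvector $\vv^T\ne\vzero$ with $\vv\ge\vzero$ and $A\vv^T=\rho(A)\,\vv^T$. Right-multiplying the traffic equation by $\vv^T$ gives
\[
\vlambda\vv^T \;=\; \valpha\vv^T+\rho(A)\,\vlambda\vv^T,\qquad\text{i.e.,}\qquad (1-\rho(A))\,\vlambda\vv^T=\valpha\vv^T.
\]
From $\vlambda=\valpha+\vlambda A\ge\valpha$ one obtains $\vlambda\vv^T\ge\valpha\vv^T\ge 0$, and both sides are finite. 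Hence it suffices to show $\valpha\vv^T>0$: then the right-hand side is strictly positive while $\vlambda\vv^T>0$ is finite, forcing $\rho(A)<1$.

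The only delicate step, and the main obstacle, is establishing $\valpha\vv^T>0$, since $A$ may be reducible and $\vv$ need not be strictly positive. My approach is to analyse the support $S:=\{i:\vv_i>0\}$. For $i\notin S$ the $i$-th component of $A\vv^T=\rho(A)\vv^T$ reads $\sum_{j}A_{ij}\vv_j=0$; as all terms are nonnegative, $A_{ij}=0$ whenever $i\notin S$ and $j\in S$. Thus in the directed graph of $A$ no edge leaves $S^c$ and enters $S$, so $S$ is backward-closed along $A$. The reachability assumption (for every queue $\ell$ there is $k$ with $(\valpha A^k)_\ell\ne 0$), applied to any fixed $j_0\in S$, then provides an index $i_0$ with $\valpha_{i_0}>0$ and a path $i_0\to\cdots\to j_0$ of strictly positive $A$-weights; iterated backward-closedness propagates membership in $S$ along this path back to $i_0$, giving $\valpha_{i_0}\vv_{i_0}>0$ and hence $\valpha\vv^T>0$. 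This closes the argument; the degenerate case $\rho(A)=0$ (where $A$ is nilpotent) is handled directly, since then the Neumann series reduces to a finite sum.
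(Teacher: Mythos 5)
Your proof is correct, but it takes a different route from the paper's. You establish $\rho(A)<1$ directly: you pair the traffic equation with a nonnegative right Perron eigenvector $\vv$ of $A$, obtain $(1-\rho(A))\,\vlambda\vv^T=\valpha\vv^T$, and then use the reachability assumption through a support-closure argument (no $A$-edge from the complement of $\{i:\vv_i>0\}$ into it, so a positive-weight path witnessing $(\valpha A^k)_{j_0}\ne 0$ pulls an index with $\valpha_{i_0}>0$ into the support) to force $\valpha\vv^T>0$ and hence $\rho(A)<1$; the Neumann series then converges to $(I-A)^{-1}$ by the telescoping identity. The paper instead first cites Berman--Plemmons (Theorem~2.1.11) to get $\rho(A)\le 1$ from $\vlambda A\le\vlambda$ with $\vlambda>\vzero$, and then rules out singularity of $I-A$ by a monotone fixed-point argument: a vector $\vy\ne\vzero$ with $\vy A=\vy$ would let one shift $\vlambda$ to a nonnegative fixed point $\vz$ of $\vf(\vx)=\valpha+\vx A$ with some $\vz_i=0$, and the iterates $\vf^k(\vzero)\le\vz$ would then contradict reachability of queue~$i$; combined with the fact that $\rho(A)$ is an eigenvalue of the nonnegative matrix $A$, this yields $\rho(A)<1$. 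Both arguments hinge on the same two ingredients (a nonnegative solution of the traffic equation and reachability), but yours is self-contained apart from the standard Perron--Frobenius fact that a nonnegative matrix has a nonnegative eigenvector for $\rho(A)$, avoids the external citation and the need to argue $\vlambda>\vzero$ (you only use $\vlambda\ge\vzero$ finite), while the paper's version replaces the eigenvector/support analysis by monotonicity of the affine map. Your contextual assumption that a deficient (indeed, any nonnegative) solution $\vlambda$ of \eqref{eq:traffic} is available matches how the paper itself proves and uses the lemma, and the compressed step ``$\rho(A)<1\Rightarrow A^N\to\vzero$ and the series converges'' is standard, so there is no gap.
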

%
Define vectors $\qs{1}, \ldots, \qs{n} \in \Rp^n$ by setting $\qsT{i} := \asT{i} / \norm{\as{i}}$,
 where $\asT{i}$ is the $i$th column of~$A^*$.
Observe that we have $\vone \qsT{i} = 1$ for all~$i$.
Define the function $V : \Rp^{n} \to \Rp$ by $V(\vx) := \max_i \{\vx \qsT{i}\}$.
We will use the following property of~$V$:
\begin{lemma} \label{lem:good-max}
 If $\vzero \ne \vx \in \Rp^n$ and $\vx_i = 0$, then $\vx \qsT{i} < V(\vx)$.
\end{lemma}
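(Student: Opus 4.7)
The plan is to exploit the fixed-point identity $A^* = I + A^* A$ (immediate from $A^* = \sum_{j \ge 0} A^j$) to write each $\qsT{i}$ as a \emph{convex combination} of $\es{i}^T$ and the family $\{\qsT{k}\}_{k=1}^n$ in which the coefficient of $\es{i}^T$ is strictly positive. Once this decomposition is in hand, the claimed strict inequality is essentially automatic.

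Concretely, reading off the $i$th column of $A^* - I = A^* A$ gives
\[
 \asT{i} \;=\; \es{i}^T + \sum_{k=1}^n A_{ki}\, \asT{k}.
\]
Taking $1$-norms of both sides (the $1$-norm here is just the sum of coordinates, since $A^* \ge 0$) yields the scalar identity $\norm{\as{i}} = 1 + \sum_k A_{ki} \norm{\as{k}}$, and dividing the vector identity by $\norm{\as{i}}$ then gives
\[
 \qsT{i} \;=\; \beta_i\, \es{i}^T + \sum_{k=1}^n \gamma_{ki}\, \qsT{k},
 \qquad
 \beta_i := \tfrac{1}{\norm{\as{i}}},\ \ \gamma_{ki} := \tfrac{A_{ki} \norm{\as{k}}}{\norm{\as{i}}}.
\]
The scalar identity says precisely that $\beta_i + \sum_k \gamma_{ki} = 1$, and since $\beta_i > 0$ and $\gamma_{ki} \ge 0$ this is a convex combination with strictly positive weight on $\es{i}^T$.

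From here it is a short calculation: taking the inner product of the decomposition with $\vx$ and using $\vx_i = 0$,
\[
 \vx \qsT{i} \;=\; \beta_i \vx_i + \sum_{k=1}^n \gamma_{ki}\, (\vx \qsT{k}) \;=\; \sum_{k=1}^n \gamma_{ki}\, (\vx \qsT{k}) \;\le\; (1 - \beta_i)\, V(\vx).
\]
To promote this to the \emph{strict} inequality $\vx \qsT{i} < V(\vx)$ one needs $V(\vx) > 0$ whenever $\vx \ne \vzero$; this is the only mildly subtle point, which I would handle by picking any index~$k$ with $\vx_k > 0$ and noting $V(\vx) \ge \vx \qsT{k} \ge \vx_k\, A^*_{kk}/\norm{\as{k}} > 0$, since the diagonal entry $A^*_{kk} \ge 1$ thanks to the identity term of $A^* = I + A + A^2 + \cdots$. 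The main conceptual step, and where the real work lies, is to recognise that the useful decomposition comes from the ``right-sided'' identity $A^* = I + A^* A$ rather than $A^* = I + A A^*$: only the former places the coefficients $A_{ki}$ in a position where summing with $\vone \asT{k} = \norm{\as{k}}$ produces the convexity relation $\beta_i + \sum_k \gamma_{ki} = 1$ that drives the whole argument.
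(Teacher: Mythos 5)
Your proof is correct, and it takes a genuinely different route from the paper's. The paper reformulates the lemma as infeasibility of a linear system (no $\bar\vx\ge\vzero$, $\bar\vx\ne\vzero$ with $\bar\vx Q_{2..n}P\le\vzero$ in its notation) and exhibits an explicit Farkas certificate built from $(A_{2..n,2..n})^*$, which requires introducing the matrices $Q$, $D$, $P$ and verifying several matrix identities. You instead unfold the $i$th column of $A^* = I + A^*A$ to get $\asT{i} = \es{i}^T + \sum_k A_{ki}\asT{k}$, take coordinate sums to get $\norm{\as{i}} = 1 + \sum_k A_{ki}\norm{\as{k}}$, and thereby write $\qsT{i}$ as a convex combination of $\es{i}^T$ and the $\qsT{k}$ with strictly positive weight $\beta_i = 1/\norm{\as{i}}$ on $\es{i}^T$; pairing with $\vx$ and using $\vx_i=0$ gives $\vx\qsT{i}\le(1-\beta_i)V(\vx)$, and your observation that $V(\vx)>0$ for $\vzero\ne\vx\in\Rp^n$ (via $A^*_{kk}\ge 1$) is exactly what makes the inequality strict; all quantities are finite and $\beta_i>0$ because $A^*$ exists by Lemma~\ref{lem:invertible} and $\norm{\as{i}}\ge 1$. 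Your argument is more elementary and self-contained (no duality), and it even yields the slightly stronger quantitative bound $\vx\qsT{i}\le\bigl(1-1/\norm{\as{i}}\bigr)V(\vx)$, whereas the paper's Farkas certificate is less transparent though it proves the same infeasibility statement. Your closing remark is also on target: $A^*=I+A^*A$ is the identity that expresses a column of $A^*$ in terms of the \emph{other} columns, which the convexity argument needs, while $A^*=I+AA^*$ gives $(I-A)\asT{i}=\es{i}^T$, the identity the paper instead exploits in the drift computation of Lemma~\ref{lem:piecewise-neg-drift}.
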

Lemma~\ref{lem:good-max} is not obvious; \iftechrep{in the appendix}{in~\cite{BK12:stacs-report}} we use Farkas' lemma for the proof.
The following lemma describes the crucial ``negative drift'' property of~$V$:
\newcommand{\stmtlempiecewisenegdrift}{
 There is $\gamma > 0$ such that we have
  \[
   \vDelta(\vx) \big(V'(\vx)\big)^T \le -\gamma \qquad \text{ for all $\vx \ne \vzero$ }
  \]
  and all subgradient vectors $V'(\vx)$ of~$V$ at~$\vx$.
 More precisely, one can choose
  \[ \gamma := \min_i (\mu_i - \vlambda_i) / \norm{\as{i}} \,,
  \]
  where $\asT{i}$ is the $i$th column of~$A^*$.
}
\begin{lemma} \label{lem:piecewise-neg-drift}
 \stmtlempiecewisenegdrift
\end{lemma}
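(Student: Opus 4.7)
The plan is to verify the drift bound on each linear piece of $V$ separately and then reduce arbitrary subgradients to this case by convexity. Since $V(\vx) = \max_i \vx\qsT{i}$ is the pointwise maximum of finitely many linear functionals, it is convex and piecewise linear, and its subdifferential at $\vx$ is the convex hull of those $\qs{k}$ for which $\vx\qsT{k} = V(\vx)$. So every subgradient can be written as $V'(\vx) = \sum_k c_k \qs{k}$ with $c_k \ge 0$, $\sum_k c_k = 1$, and $c_k = 0$ unless $k$ attains the maximum. It therefore suffices to prove that $\vDelta(\vx)\qsT{k} \le -\gamma$ for every such maximizer~$k$; the full bound then follows by taking the convex combination.

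The key algebraic step is to evaluate $\vDelta(\vx)\qsT{k}$ cleanly. Starting from Lemma~\ref{lem:invertible}, the identity $A A^* = A^* - I$ yields, entrywise, $A_i \asT{k} = A^*_{ik} - \delta_{ik}$, and dividing by $\norm{\as{k}}$ gives the crucial cancellation
\[
 -\es{i}\qsT{k} + A_i\qsT{k} \;=\; -\delta_{ik}/\norm{\as{k}} \,.
\]
Plugging this into the explicit formula~\eqref{eq:mean-velocity-main} for $\vDelta(\vx)$ collapses the potentially large sum over all active queues~$i$ to a single term indexed by~$k$:
\[
 \vDelta(\vx)\qsT{k} \;=\; \valpha \qsT{k} \;-\; [\vx_k \ne 0]\, \mu_k/\norm{\as{k}} \,.
\]
Rewriting the traffic equation~\eqref{eq:traffic} as $\vlambda = \valpha A^*$ identifies $\valpha\qsT{k} = \vlambda_k/\norm{\as{k}}$, so
\[
 \vDelta(\vx)\qsT{k} \;=\; \bigl(\vlambda_k - [\vx_k \ne 0]\,\mu_k\bigr)/\norm{\as{k}} \,.
\]

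To close the proof I would invoke Lemma~\ref{lem:good-max}: because $\vx \ne \vzero$, every maximizer $k$ necessarily satisfies $\vx_k \ne 0$, so the indicator equals $1$ and $\vDelta(\vx)\qsT{k} = (\vlambda_k - \mu_k)/\norm{\as{k}} \le -\gamma$ by the assumption $\vlambda < \vmu$ together with the definition of~$\gamma$. Averaging over the convex combination defining $V'(\vx)$ yields $\vDelta(\vx)(V'(\vx))^T \le -\gamma$.

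The step I expect to be the main obstacle is the algebraic identity $A_i\qsT{k} - \es{i}\qsT{k} = -\delta_{ik}/\norm{\as{k}}$: it is the reason that the different contributions from the active queues cancel so that only the queue indexed by the maximizer survives, and getting the row/column conventions straight requires some care. The second essential ingredient is Lemma~\ref{lem:good-max}, which rules out the otherwise troubling case $\vx_k = 0$ for a maximizer~$k$; without it, the surviving contribution would be $\vlambda_k/\norm{\as{k}} \ge 0$ and the negative drift would be lost.
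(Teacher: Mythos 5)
Your proof is correct and follows essentially the same route as the paper: the cancellation $(-\es{i}+A_i)\qsT{k}=-\delta_{ik}/\norm{\as{k}}$ from $(I-A)A^*=I$, Lemma~\ref{lem:good-max} to guarantee $\vx_k\ne 0$ at any maximizer, and the traffic equation together with the definition of $\gamma$ to close the bound. The only cosmetic differences are that you make the convex-hull description of the subdifferential explicit and use $\vlambda=\valpha A^*$ directly to get $\valpha\qsT{k}=\vlambda_k/\norm{\as{k}}$, whereas the paper reaches the same identity by reinserting the traffic equation at the end of its chain of (in)equalities.
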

\begin{example}
 Consider the network with
 \[
   0 \btran{1} 1
    \qquad \begin{array}{l} 1 \btran{1/5} 2,2 \\ 1 \btran{4/5} \varepsilon \end{array}
    \qquad \begin{array}{l} 2 \btran{1/6} 1,2 \\ 2 \btran{5/6} \varepsilon \end{array}\,,
 \]
 arrival rate $\mu_0 = 7/30$, and $\vmu = (5/12, 7/20)$, so that $\mu_0 + \mu_1 + \mu_2 = 1$.
 Let us write $[0] := \valpha = \mu_0 (1,0)$, \ $[1] := \mu_1 (-\es{1} + A_1)$, \ $[2] := \mu_2 (-\es{2} + A_2)$, \
  $[01] := [0] + [1]$, \ $[02] := [0] + [2]$, \ $[012] := [0] + [1] + [2]$.
 These vectors are shown in Figure~\ref{fig:piecewise-neg-drift}~(a).
 The mean velocity vector $\vDelta(\vx)$ is one of the vectors $[0], [01], [02], [012]$,
  depending on which components of~$\vx$ are nonzero.
 The vector field in Figure~\ref{fig:piecewise-neg-drift}~(b) shows the corresponding vectors for several $\vx \in \N^2$.
 The connected line segments indicate points $\vx$ with the same value of
  $V(\vx) = \max \{\vx \qsT{1}, \vx \qsT{2}\} = \max \{ \frac56 \vx_1 + \frac16 \vx_2, \frac27 \vx_1 + \frac57 \vx_2 \}$
   (values $0.5, 1, 1.5, \ldots$).
 It can be seen from the figure that the drift is negative with respect to the gradient of~$V$, if $\vx \ne \vzero$.
\tikzset{>=latex',
 axis/.style={->},
 elem/.style={->,thick},
 drift/.style={->,ultra thick},
}
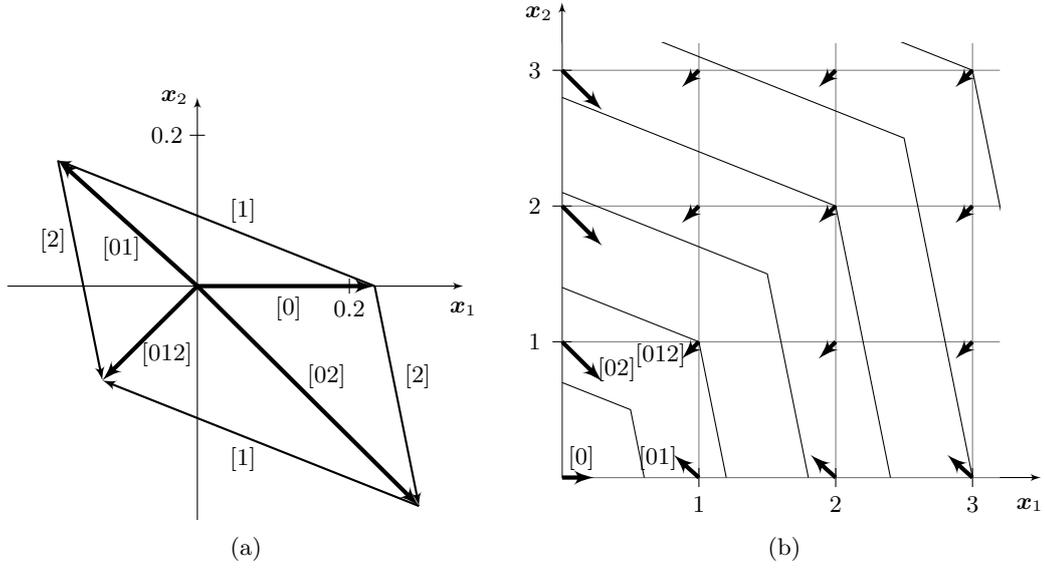
\begin{figure}
\begin{tabular}{cc}
\begin{tikzpicture}[scale=10]
 \draw[axis] (-0.25,0) -- (0.35,0);
 \draw[axis] (0,-0.31) -- (0,0.25);
 \draw (0.2,-0.01) -- (0.2,+0.01);
 \draw (-0.01,0.2) -- (+0.01,0.2);
 \node             at (0.2,-0.03) {$0.2$};
 \node             at (0.35,-0.03) {$\vx_1$};
 \node             at (-0.04,0.2) {$0.2$};
 \node             at (-0.03,0.25) {$\vx_2$};
 \draw[elem] (7/30,0) -- (-11/60,1/6);
 \draw[elem] (-11/60,1/6) -- (-1/8,-1/8);
 \draw[elem] (7/30,0) -- (7/24,-7/24);
 \draw[elem] (7/24,-7/24) -- (-1/8,-1/8);
 \draw[drift] (0,0) -- (7/30,0); 
 \draw[drift] (0,0) -- (-11/60,1/6); 
 \draw[drift] (0,0) -- (7/24,-7/24); 
 \draw[drift] (0,0) -- (-1/8,-1/8); 
 \node at (0.12,-0.03) {$[0]$};
 \node at (0.06,0.1) {$[1]$};
 \node at (0.06,-0.23) {$[1]$};
 \node at (0.29,-0.12) {$[2]$};
 \node at (-0.19,0.06) {$[2]$};
 \node at (-0.1,0.05) {$[01]$};
 \node at (0.17,-0.12) {$[02]$};
 \node at (-0.04,-0.09) {$[012]$};
\end{tikzpicture}
&
\begin{tikzpicture}[scale=1.8]
 \draw[axis] (0,0) -- (3.5,0);
 \draw[axis] (0,0) -- (0,3.5);
 \node at (3.42,-0.2) {$\vx_1$};
 \node at (-0.17,3.42) {$\vx_2$};
 \draw[step=1,gray] (0,0) grid (3.2,3.2);
 \draw (1,-0.07) -- (1,0.07);
 \draw (2,-0.07) -- (2,0.07);
 \draw (3,-0.07) -- (3,0.07);
 \draw (-0.07,1) -- (0.07,1);
 \draw (-0.07,2) -- (0.07,2);
 \draw (-0.07,3) -- (0.07,3);
 \node at (1,-0.2) {$1$};
 \node at (2,-0.2) {$2$};
 \node at (3,-0.2) {$3$};
 \node at (-0.2,1) {$1$};
 \node at (-0.2,2) {$2$};
 \node at (-0.2,3) {$3$};
 \draw[drift] (0,0) -- (7/30,0);
 \draw[drift] (1,0) -- (1-11/60,1/6); 
 \draw[drift] (2,0) -- (2-11/60,1/6); 
 \draw[drift] (3,0) -- (3-11/60,1/6); 
 \draw[drift] (0,1) -- (7/24,1-7/24); 
 \draw[drift] (0,2) -- (7/24,2-7/24); 
 \draw[drift] (0,3) -- (7/24,3-7/24); 
 \draw[drift] (1,1) -- (1-1/8,1-1/8); 
 \draw[drift] (2,1) -- (2-1/8,1-1/8); 
 \draw[drift] (3,1) -- (3-1/8,1-1/8); 
 \draw[drift] (1,2) -- (1-1/8,2-1/8); 
 \draw[drift] (2,2) -- (2-1/8,2-1/8); 
 \draw[drift] (3,2) -- (3-1/8,2-1/8); 
 \draw[drift] (1,3) -- (1-1/8,3-1/8); 
 \draw[drift] (2,3) -- (2-1/8,3-1/8); 
 \draw[drift] (3,3) -- (3-1/8,3-1/8); 
 \clip (0,0) rectangle (3.2,3.2);
 \draw (1*0.6,0) -- (1*0.5,1*0.5) -- (0,1*0.7);
 \draw (2*0.6,0) -- (2*0.5,2*0.5) -- (0,2*0.7);
 \draw (3*0.6,0) -- (3*0.5,3*0.5) -- (0,3*0.7);
 \draw (4*0.6,0) -- (4*0.5,4*0.5) -- (0,4*0.7);
 \draw (5*0.6,0) -- (5*0.5,5*0.5) -- (0,5*0.7);
 \draw (6*0.6,0) -- (6*0.5,6*0.5) -- (0,6*0.7);
 \node at (0.14,0.14) {$[0]$};
 \node at (0.7,0.14) {$[01]$};
 \node at (0.4,0.8) {$[02]$};
 \node[inner sep=0mm] at (0.72,0.9) {$[012]$};
\end{tikzpicture}
\\
(a)
&
(b)
\end{tabular}
\caption{Illustration of negative drift.}
\label{fig:piecewise-neg-drift}
\end{figure}
\end{example}
\begin{proof}[Proof of Lemma~\ref{lem:piecewise-neg-drift}]
 Let $\vx \ne \vzero$.
 We need to show $\vDelta(\vx) \qsT{i} \le -\gamma$ for all $i$ with $\vx \qsT{i} = V(\vx)$.
 W.l.o.g.\ we assume that $\vx \qsT{1} = V(\vx)$ and show only $\vDelta(\vx) \qsT{1} \le -\gamma$.
 By Lemma~\ref{lem:good-max} we have $\vx_1 \ne 0$.
 It follows from the property $(I-A) A^* = I$ and the definition of~$\qs{1}$ that we have
  \begin{equation} \label{eq:q1-cancellations-main}
   (-{\es{1}} + A_1) \qsT{1} = -1/\norm{\as{1}} \qquad \text{and} \qquad (-\es{i} + A_i) \qsT{1} = 0 \quad \text{ for $2 \le i \le n$.}
  \end{equation}
 Hence we have:
 \begin{align*}
  \vDelta(\vx) \qsT{1} & = \left( \valpha + \sum_{i: \vx_i \ne 0} \mu_i (-{\es{i}} + A_i) \right) \qsT{1}  && \text{by~\eqref{eq:mean-velocity-main}} \\
                        & = \valpha \qsT{1} - \mu_1 / \norm{\as{1}} && \text{by~\eqref{eq:q1-cancellations-main} and $\vx_1 \ne 0$} \\
                        & \le -\gamma + \valpha \qsT{1} - \vlambda_1 / \norm{\as{1}} && \text{by the definition of~$\gamma$} \\
                        & = -\gamma + \left( \valpha + \sum_{i=1}^n \vlambda_i (-{\es{i}} + A_i) \right) \qsT{1}
                                  && \text{by~\eqref{eq:q1-cancellations-main}} \\
                        & = -\gamma + \left( \valpha + \vlambda (-I+A) \right) \qsT{1} \\
                        & = -\gamma + \vzero \qsT{1} = -\gamma && \text{by the traffic equation.}
 \end{align*}
\end{proof}
Using integration, one can obtain a two times continuously differentiable function~$\tV$ satisfying the conditions in Lemma~\ref{lem:thm1-piecewise}:
the function~$V$ is smoothed by defining $\tV(\vx)$, for all~$\vx$, as an ``average'' of the values~$V(\vy)$ 
  where $\vy$ belongs to a small ball around~$\vx$;
 see the appendix of~\cite{DownMeyn97piecewise} for the formal details.
This concludes the proof of Proposition~\ref{prop:traffic-erg}.

%
%
%

\subsection{Controlled branching networks} \label{sec:contr}
In this subsection we generalize the traffic equations~(\ref{eq:traffic}) to deal with an arbitrary controlled branching network $\mathcal{N}$.
%
To obtain a distribution on actions for a static randomized ergodic scheduler,
we assign variables to actions instead of queues, i.e., for every action $\xi$ of the network we introduce a variable $\lambda_{\xi}$ capturing the {\em rate of firing the action $\xi$}.
Denote by $\bSigma$ the set $\bigcup_{i=1}^n \Sigma_i$.
Given $\zeta\in \bSigma$ and $j\in \{1,\ldots,n\}$, we denote by $A_{\zeta j}$ the average number of jobs added to the queue $j$ when the action $\zeta$ fires,
 i.e., for $\zeta\in \Sigma_i$ we set
\[
A_{\zeta j} := \sum_{\vr\in R_i(\zeta)} \Prob_i(\zeta)(\vr) \cdot \vr_j\,.
\]
 %
We generalize~\eqref{eq:traffic} to the \emph{traffic LP} presented in Figure~\ref{fig:LP-delta},
 where the variable~$\delta$ is intended to bound, for all~$j$, the probability that queue~$j$ is busy.
\begin{figure}[ht]
\begin{center}
\[
\min\, \delta \ \text{subject to}
\]
\begin{align*}
\sum_{\xi\in \Sigma_j} \lambda_{\xi} \quad & = \quad \valpha_j+\sum_{i=1}^n \sum_{\zeta\in \Sigma_i} \lambda_{\zeta}\cdot A_{\zeta j} & j\in \{1,\ldots,n\} \\
\delta \quad & \geq \quad \frac{\sum_{\xi\in \Sigma_j} \lambda_{\xi}}{\mu_j} & \quad  \quad j\in \{1,\ldots,n\} \\
\lambda_{\xi} \quad & \geq \quad 0 & \xi\in \bar{\Sigma}
\end{align*}

\end{center}
\caption{The traffic LP.}
\label{fig:LP-delta}
\end{figure}

We prove the following
%
%
\begin{proposition}\label{prop:traffic-LP}\
\begin{enumerate}
\item
    If there exists an arbitrary ergodic scheduler for~$\Net$, then the traffic LP can be solved with $\min \delta<1$.
   \item If the traffic LP is solved with $\min \delta<1$, one can compute in polynomial time a static randomized ergodic scheduler $\Theta_s$ for~$\Net$. Moreover, denoting by $\rho_i$ the {\em utilization} $\lim_{t \to \infty} \Pr(\vx_i(t) \ne 0)$
   of the queue $i$, the scheduler $\Theta_s$ minimizes 
     $\max_i \rho_i$ among all memoryless ergodic schedulers.
\end{enumerate}
%
Hence one can decide in polynomial time whether an arbitrary ergodic scheduler exists;
 if yes, one can compute in polynomial time a static randomized ergodic scheduler.
\end{proposition}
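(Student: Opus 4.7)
The plan is to derive the LP constraints from an ergodic scheduler's long-run behaviour for Part~(1), and conversely to convert an LP-feasible $\delta<1$ solution into a static randomized scheduler whose induced CTMC is a purely stochastic branching network that satisfies the hypotheses of Proposition~\ref{prop:traffic-erg}.

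For Part~(1) I would fix an arbitrary ergodic scheduler $\Theta$. Since ergodicity gives $\E[R]<\infty$, the state $\vzero$ is positive recurrent and the process $\Net_\Theta$ is regenerative with regeneration epochs at returns to $\vzero$. By the renewal--reward theorem, for every action $\xi \in \bSigma$ the long-run average number of firings of $\xi$ per unit time exists; call it $\lambda_\xi$. Similarly, $\rho_j := \lim_{t\to\infty}\Pr(\vx_j(t)\ne \vzero)$ exists, and a positive fraction of time is spent in $\vzero$, so $\rho_j<1$. An action in $\Sigma_j$ fires only when queue~$j$ is non-empty and at total rate $\mu_j$, so $\sum_{\xi\in\Sigma_j}\lambda_\xi = \mu_j\rho_j$. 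The long-run inflow into queue~$j$ is $\valpha_j + \sum_{i}\sum_{\zeta\in\Sigma_i}\lambda_\zeta A_{\zeta j}$, and in steady state inflow equals outflow; this yields the traffic-LP equality. Choosing $\delta := \max_j\rho_j$ produces a feasible solution of the traffic LP with $\delta<1$.

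For Part~(2) I would start from an LP solution $(\lambda_\xi^*)$ with $\min\delta = \delta^*<1$; set $\Lambda_i := \sum_{\zeta\in\Sigma_i}\lambda_\zeta^*$; and define a distribution $p_i(\xi) := \lambda_\xi^*/\Lambda_i$ on $\Sigma_i$ when $\Lambda_i>0$ (arbitrary otherwise). Let $\Theta_s$ be the static randomized scheduler that always samples actions from the product distribution $p_1\otimes\cdots\otimes p_n$ on $\Sigma$. Since $\Theta_s$ ignores the state, $\Net_{\Theta_s}$ is a purely stochastic branching network in which the effective production function of queue~$i$ is $\Prob_i^{\mathrm{mix}}(\vr) = \sum_{\xi\in\Sigma_i}p_i(\xi)\Prob_i(\xi)(\vr)$, giving the matrix entries $A_{ij} = \sum_{\xi\in\Sigma_i}p_i(\xi)A_{\xi j}$. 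A direct substitution then shows that setting the purely stochastic flow vector $\vlambda := (\Lambda_1,\ldots,\Lambda_n)$ solves the purely stochastic traffic equation~\eqref{eq:traffic} for $\Net_{\Theta_s}$, while the LP constraint $\delta^* \ge \Lambda_j/\mu_j$ together with $\delta^*<1$ gives $\vlambda<\vmu$. Proposition~\ref{prop:traffic-erg} then yields ergodicity of $\Theta_s$ (indeed with the stronger exponential-moment conclusion). The whole procedure is polynomial time because the traffic LP has polynomial size in $\Net$ and is solvable in polynomial time.

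For the minimality claim, applying the Part~(1) argument to any memoryless ergodic scheduler produces an LP-feasible solution with $\delta = \max_j\rho_j$, so $\delta^*$ is a lower bound on $\max_j\rho_j$ over all memoryless ergodic schedulers; and for $\Theta_s$ itself the same inflow-equals-outflow identity applied to the induced purely stochastic network gives $\rho_j = \Lambda_j/\mu_j$, hence $\max_j\rho_j = \delta^*$. Combining Part~(1) and Part~(2) yields the polynomial-time decision procedure stated at the end. I expect the main obstacle to lie in Part~(1): for a history-dependent randomized scheduler the process $\Net_\Theta$ is not a CTMC, so the existence of the long-run rates $\lambda_\xi,\rho_j$ and the flow-balance identities must be justified via regenerative/renewal arguments using the return times to $\vzero$ rather than via a stationary distribution of a Markov chain.
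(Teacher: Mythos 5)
Your Part~2 is essentially the paper's argument (normalize the LP solution into a static randomized scheduler, check that the induced purely stochastic network has a deficient solution of its traffic equations, invoke Proposition~\ref{prop:traffic-erg}), but Part~1 has a genuine gap. You treat the process $\Net_\Theta$ under an \emph{arbitrary} ergodic scheduler as regenerative at the returns to $\vzero$ and apply the renewal--reward theorem. Regeneration at $\vzero$ holds only when the scheduler's future behaviour is determined by the current state; a history-dependent scheduler can act differently in different cycles, so the cycles are neither independent nor identically distributed, and the long-run rates $\lambda_\xi$ and the limits $\rho_j=\lim_{t\to\infty}\Pr(\vx_j(t)\ne 0)$ need not exist at all. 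The problem is aggravated by the paper's definition of ergodicity, which only requires the expected \emph{first} return time to $\vzero$ to be finite: an ergodic history-dependent scheduler may, after the first return, switch to a destabilizing behaviour forever, so even when time averages exist they need not yield an LP solution with $\delta<1$. You name this obstacle at the end but propose to overcome it with exactly the regenerative structure that history dependence destroys. The missing idea is the reduction step the paper takes before any renewal argument: by Theorem~7.3.8 of \cite{book:Puterman}, applied to the embedded discrete-time MDP, the existence of any ergodic scheduler implies the existence of a \emph{memoryless} (deterministic) ergodic one, and only for memoryless schedulers is the SLLN/regeneration argument (the paper's Lemma~\ref{prop:erg-traffic}) carried out.

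A smaller point in Part~2: to get $\rho_j=\Lambda_j/\mu_j$ for $\Theta_s$ you cannot appeal to flow balance alone. The renewal argument shows that the actual firing frequencies of $\Net_{\Theta_s}$ form \emph{some} nonnegative solution of the induced traffic equations with $\rho_j$ equal to that solution divided by $\mu_j$; to identify it with $(\Lambda_1,\ldots,\Lambda_n)$ you need uniqueness of the solution, which the paper obtains from the invertibility of $I-A'$ (Lemma~\ref{lem:invertible}). Without this step the utilizations could a priori correspond to a different solution, and the claim that $\Theta_s$ minimizes $\max_i\rho_i$ among memoryless ergodic schedulers would not follow.
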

Let us first concentrate on part~1. Let $\Theta$ be an ergodic scheduler.
Roughly speaking, we prove that a feasible solution of the traffic LP can be constructed using (limit) frequencies of firing individual actions in $\mathcal{N}_{\Theta}$.
Formally, given a run $\omega$ of $\mathcal{N}_{\Theta}$, $t\in \Rp$ and $\xi\in \bSigma$, we denote by $O^{\leq t}_{\xi}(\omega)$ the number of times the action $\xi$ is fired up to time $t$ on $\omega$. 
For memoryless $\Theta$ we have the following result.
%
\newcommand{\stmtpropergtraffic}{
Assume that $\Theta$ is a {\em memoryless} ergodic scheduler.
For every $\xi\in \bSigma$ there is a constant $O_{\xi}$ such that for almost all runs $\omega$ of $\mathcal{N}_{\Theta}$ the limit \[\lim_{t\rightarrow{} \infty} \frac{O^{\leq t}_{\xi}(\omega)}{t}\]
exists and is equal to $O_{\xi}$. There is $\bar{\delta}<1$ such that $\left(\bar{\delta},O_{\xi}\mid \xi\in \bSigma\right)$ solves the traffic LP.
Moreover, for every $i\in \{1,\ldots,n\}$ the utilization $\rho_i$ of the queue $i$ in $\Net_{\Theta}$ is equal to $\frac{\sum_{\xi\in \Sigma_i} O_{\xi}}{\mu_i}$.
}
\begin{lemma}\label{prop:erg-traffic}
\stmtpropergtraffic
\end{lemma}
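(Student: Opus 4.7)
The plan is to exploit the regenerative structure at~$\vzero$. Since $\Theta$ is memoryless, $\Net_\Theta$ is a CTMC on~$\N^n$; ergodicity means $\E[R] < \infty$, so $\vzero$ is positive recurrent, the unique stationary distribution~$\pi$ exists with $\pi(\vzero) > 0$, and consecutive visits to~$\vzero$ split almost every run into i.i.d.\ cycles of finite mean length~$\E[R]$. For $\xi \in \Sigma_i$ and $\vx \in \N^n$ let $p_\vx(\xi) := \sum_{\sigma \in \Sigma:\ \sigma_i = \xi} \Theta(\vx)(\sigma)$ be the marginal probability that~$\Theta$ selects~$\xi$ as the $i$th component in~$\vx$, so that action $\xi \in \Sigma_i$ fires in state~$\vx$ at rate $r_\xi(\vx) := \mu_i\, p_\vx(\xi)\, \mathbf{1}[\vx_i \ne 0] \le \mu_i$. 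Write $M := \mu_0 + \sum_{i=1}^n \mu_i$ for the uniform bound on the total jump rate.

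First, I would establish the long-run firing rate. Writing $N^\xi(t) := O^{\le t}_\xi$, the compensated process $N^\xi(t) - \int_0^t r_\xi(\vx(s))\,ds$ is a martingale with unit jumps; since $\E[R] < \infty$ and the compensator grows at rate at most $\mu_i$, optional stopping at~$R$ gives $\E_\vzero[N^\xi(R)] = \E_\vzero\bigl[\int_0^R r_\xi(\vx(s))\,ds\bigr] \le \mu_i \E[R] < \infty$. Smith's strong law for regenerative processes then yields $N^\xi(t)/t \to O_\xi := \E_\vzero[N^\xi(R)]/\E[R]$ almost surely. The same argument applied to the occupancy time $T^i(t) := \int_0^t \mathbf{1}[\vx_i(s) \ne 0]\,ds$ gives $T^i(t)/t \to \rho_i := \E_\vzero[T^i(R)]/\E[R] = \pi(\{\vx : \vx_i \ne 0\})$ a.s., which matches the utilization as defined in Proposition~\ref{prop:traffic-LP}.

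Second, the identity $\sum_{\xi \in \Sigma_i} O_\xi = \mu_i \rho_i$ follows by noting that $\sum_{\xi \in \Sigma_i} N^\xi(t)$ counts firings of queue~$i$ and has compensator $\int_0^t \mu_i \mathbf{1}[\vx_i(s) \ne 0]\,ds = \mu_i T^i(t)$; optional stopping at~$R$ and division by~$\E[R]$ yield the claim. Because $\pi(\vzero) > 0$, we have $\rho_i \le 1 - \pi(\vzero) < 1$ for every~$i$, so $\bar\delta := \max_i \rho_i$ is strictly less than~$1$ and fulfils the second family of LP constraints.

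Finally, for the flow-balance constraints I would apply the same compensator/optional-stopping scheme to the coordinate function $\vx_j$. A direct computation of the generator yields
\[
  \vDelta^\Theta_j(\vx) \;=\; \valpha_j + \sum_{i=1}^n \mathbf{1}[\vx_i \ne 0]\, \mu_i \sum_{\xi \in \Sigma_i} p_\vx(\xi)\bigl(-\mathbf{1}[i=j] + A_{\xi j}\bigr),
\]
which is bounded in absolute value by $(1+K)M$, while the jumps of~$\vx_j$ are bounded by~$K$. The martingale $\vx_j(t) - \int_0^t \vDelta^\Theta_j(\vx(s))\,ds$ therefore satisfies the hypotheses of optional stopping at~$R$; invoking $\vx_j(R) = \vx_j(0) = 0$ and dividing by~$\E[R]$ produces
\[
  0 \;=\; \valpha_j - \sum_{\xi \in \Sigma_j} O_\xi + \sum_{i=1}^n \sum_{\zeta \in \Sigma_i} O_\zeta\, A_{\zeta j},
\]
which is exactly the first family of LP constraints. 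The main technical obstacle is that the state space is infinite, so the ergodic and martingale identities cannot be obtained by a naive appeal to stationarity but must be routed through the regeneration at~$\vzero$; uniform boundedness of the per-state rates, jump magnitudes, and mean drift, together with $\E[R] < \infty$, is precisely what makes every optional-stopping step legitimate.
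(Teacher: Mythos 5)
Your proposal is correct, and it rests on the same backbone as the paper's proof: regeneration at~$\vzero$ under a memoryless scheduler, finiteness of the mean cycle length, and a renewal-reward (Smith) argument giving the almost-sure limits $O_\xi=\E_\vzero[N^\xi(R)]/\E[R]$ — this is exactly what the paper's Claim~1 establishes by hand via the strong law of large numbers over i.i.d.\ cycles. Where you genuinely diverge is in how the LP constraints are extracted. The paper derives the flow-balance equations by a path-wise counting argument: the number of firings of queue~$j$ and the number of jobs delivered to queue~$j$ differ by a quantity that is controlled because queue~$j$ empties infinitely often, and the per-firing production converges to $A_{\zeta j}$ by the SLLN; it then gets the utilization formula and the strict bound $\sum_{\xi\in\Sigma_i}O_\xi/\mu_i<1$ by elementary time-accounting (busy time per job $\to 1/\mu_i$, plus a positive fraction $\pi(\vzero)$ of time spent idle at~$\vzero$, invoking the ergodic theorem). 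You instead route everything through compensator/Dynkin identities and optional stopping at~$R$: the compensated counting processes give $\E_\vzero[N^\xi(R)]=\E_\vzero[\int_0^R r_\xi\,ds]$ and hence $\sum_{\xi\in\Sigma_i}O_\xi=\mu_i\rho_i$ in one stroke, the drift of the coordinate $\vx_j$ together with $\vx_j(R)=\vx_j(0)=0$ gives the balance equation, and $\rho_i\le 1-\pi(\vzero)<1$ gives strictness (the same underlying idea as the paper's Claim~3, phrased through the stationary distribution). Your martingale route is more uniform and avoids the somewhat ad hoc sandwich argument of the paper's Claim~2, and it yields the finiteness of $\E_\vzero[N^\xi(R)]$ more cleanly than the paper's ``finitely many rates'' remark; the price is that you must justify the (local) martingale property and the optional-stopping passages on an infinite state space, which you correctly flag and which indeed go through because the firing rates, the jump sizes (at most~$K$), and the drift are uniformly bounded while $\E[R]<\infty$ (dominate $|\vx_j(t\wedge R)|$ by $K$ times the number of jumps in $[0,R]$, which has finite mean). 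The paper's argument is more elementary, using only the SLLN and the ergodic theorem for CTMCs. Both proofs deliver the same constants and the same utilization identity, so either can serve as a proof of the lemma.
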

We prove Lemma~\ref{prop:erg-traffic} \iftechrep{in Appendix~\ref{app:contr}}{in~\cite{BK12:stacs-report}}.
If there exists an arbitrary (i.e.\ possibly history-dependent) ergodic scheduler,
 then by~Theorem~7.3.8 of \cite{book:Puterman} there exists also a memoryless (and deterministic) ergodic scheduler.%
\footnote{To be formally correct, we apply Theorem~7.3.8 of \cite{book:Puterman} to the {\em embedded} discrete time MDP and obtain a scheduler which returns to the state $\vzero$ in finitely many steps (on average). As there are only finitely many rates in our system, this means that also the expected return time to $\vzero$ is finite.}
This fact, combined with Lemma~\ref{prop:erg-traffic}, implies part~1.\ of Proposition~\ref{prop:traffic-LP}.

Now let us concentrate on part 2.\ of Proposition~\ref{prop:traffic-LP}.
\begin{lemma}\label{lem:LP-sched}
Any feasible solution $\left(\bar{\delta},\bar{\lambda}_{\xi}\mid \xi\in \bSigma\right)$ of the traffic LP with $\bar{\delta}<1$ {\em induces} a static randomized ergodic scheduler whose utilization of any queue $i$ is equal to $\frac{\sum_{\xi\in \Sigma_i} \bar{\lambda}_{\xi}}{\mu_i}$.
\end{lemma}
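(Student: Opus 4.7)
My plan is to define the static scheduler directly from the LP solution, reduce to the purely stochastic setting, and then invoke Proposition~\ref{prop:traffic-erg}. Concretely, writing $\bar{\lambda}_{\Sigma_i} := \sum_{\xi \in \Sigma_i} \bar{\lambda}_\xi$, I would define the static randomized scheduler $\Theta_s$ to pick, for queue~$i$, the action $\xi \in \Sigma_i$ with probability $\bar{\lambda}_\xi / \bar{\lambda}_{\Sigma_i}$ whenever $\bar{\lambda}_{\Sigma_i} > 0$, and an arbitrary fixed action in $\Sigma_i$ otherwise. Since the choice does not depend on the history, $\Theta_s$ is static.

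Fixing $\Theta_s$ turns $\Net$ into a purely stochastic branching network $\Net'$ whose production function for queue~$i$ is the convex combination $\sum_{\xi \in \Sigma_i} (\bar{\lambda}_\xi / \bar{\lambda}_{\Sigma_i}) \Prob_i(\xi)$ (or $\Prob_i$ of the arbitrary fixed action, when $\bar{\lambda}_{\Sigma_i} = 0$). Writing $A'$ for the corresponding production matrix, the row $A'_i$ equals $\frac{1}{\bar{\lambda}_{\Sigma_i}} \sum_{\xi \in \Sigma_i} \bar{\lambda}_\xi A_\xi$ whenever $\bar{\lambda}_{\Sigma_i} > 0$. Setting $\vlambda'_j := \bar{\lambda}_{\Sigma_j}$, the first group of LP constraints yields
\[
 \vlambda'_j \;=\; \valpha_j + \sum_{i=1}^n \sum_{\zeta \in \Sigma_i} \bar{\lambda}_\zeta A_{\zeta j}
         \;=\; \valpha_j + \sum_{i=1}^n \bar{\lambda}_{\Sigma_i} A'_{i j}
         \;=\; \valpha_j + \sum_{i=1}^n \vlambda'_i A'_{i j},
\]
so $\vlambda'$ solves the traffic equation~\eqref{eq:traffic} of~$\Net'$ (the case $\bar{\lambda}_{\Sigma_i} = 0$ contributes zero on both sides). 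The second group of LP constraints together with $\bar{\delta}<1$ gives $\vlambda'_j/\mu_j \le \bar{\delta} < 1$, i.e.\ $\vlambda' < \vmu$. Proposition~\ref{prop:traffic-erg} applied to $\Net'$ then yields ergodicity of $\Net_{\Theta_s} = \Net'$.

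For the utilization, since $\Theta_s$ is memoryless and ergodic, Lemma~\ref{prop:erg-traffic} gives constants $O_\xi$ with $\rho_i = \sum_{\xi \in \Sigma_i} O_\xi / \mu_i$ and $(O_\xi)$ feasible in the traffic LP. Averaging as above shows that $\widehat{\vlambda}_j := \sum_{\xi \in \Sigma_j} O_\xi$ also solves $\vlambda = \valpha + \vlambda A'$. By Lemma~\ref{lem:invertible} applied to~$\Net'$, the matrix $I - A'$ is invertible (the averaged matrix inherits convergence of $(A')^*$ from the deficient traffic equation, since $\vlambda' = \valpha (A')^*$ is finite and component-wise positive on reachable queues), so $\widehat{\vlambda} = \vlambda'$ and therefore $\rho_i = \vlambda'_i/\mu_i = \bar{\lambda}_{\Sigma_i}/\mu_i$, as required.

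The main obstacle I expect is the very last step, namely matching the firing rates $O_\xi$ produced by the stochastic dynamics with the given $\bar{\lambda}_\xi$. The cleanest route is through uniqueness of the solution of the induced purely stochastic traffic equation, via Lemma~\ref{lem:invertible}; a secondary care point is the handling of queues with $\bar{\lambda}_{\Sigma_i} = 0$, which forces $\valpha_i = 0$ and no production into~$i$ from other queues once the LP constraints are combined, so these queues stay empty almost surely and both expressions for $\rho_i$ vanish.
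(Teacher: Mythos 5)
Your construction is essentially the paper's: normalize the LP values into a static scheduler, view the fixed-scheduler system as a purely stochastic network $\Net'$ with averaged production matrix $A'$, check that $\vlambda'_j=\sum_{\xi\in\Sigma_j}\bar\lambda_\xi$ solves the traffic equations of $\Net'$ deficiently, invoke Proposition~\ref{prop:traffic-erg} for ergodicity, and pin down the utilization via uniqueness of the solution of~\eqref{eq:traffic-derived} (invertibility of $I-A'$, Lemma~\ref{lem:invertible}) combined with Lemma~\ref{prop:erg-traffic}. Two steps need tightening, though. First, you apply Lemma~\ref{prop:erg-traffic} to $\Theta_s$ on the controlled network and then assert that the aggregated frequencies $\widehat{\vlambda}_j=\sum_{\xi\in\Sigma_j}O_\xi$ satisfy $\vlambda=\valpha+\vlambda A'$ ``by averaging as above''. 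That does not follow from LP-feasibility of $(O_\xi)$ alone: rewriting $\sum_{\zeta\in\Sigma_i}O_\zeta A_{\zeta j}$ as $\widehat{\vlambda}_i A'_{ij}$ needs the proportionality $O_\zeta=P_\zeta\,\widehat{\vlambda}_i$, i.e.\ that under the static scheduler the firings of queue~$i$ split among its actions in the ratios $P_\zeta$ (true, by the strong law, since actions are drawn i.i.d.\ at each firing, but it must be said). The paper avoids this entirely by applying Lemma~\ref{prop:erg-traffic} to $\Net'$ itself, where each queue has a single action and the LP constraints are literally~\eqref{eq:traffic-derived}.

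Second, your treatment of queues with $\bar\lambda_{\Sigma_i}=0$: you keep them and give them an arbitrary fixed action, but then ``Lemma~\ref{lem:invertible} applied to $\Net'$'' is not justified and can be false. If the arbitrary action of such a queue produces a job back into that queue, then $A'_{ii}=1$, $I-A'$ is singular, $(A')^*$ diverges, and the equation for coordinate~$i$ is satisfied by every value of $\lambda_i$, so the uniqueness step collapses; moreover the standing reachability assumption behind Section~\ref{sec:uncontr} (and hence behind Lemma~\ref{lem:invertible} and, formally, Proposition~\ref{prop:traffic-erg}) fails for these unreachable queues. Your closing remark that such queues stay empty almost surely is correct and is the right ingredient, but the clean fix is the paper's: remove these queues (w.l.o.g.), or equivalently carry out the invertibility/uniqueness argument on the subnetwork of queues with $\bar\lambda_{\Sigma_i}>0$ and handle the empty queues separately, noting $\rho_i=0=\bar\lambda_{\Sigma_i}/\mu_i$ for them. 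With these two repairs your argument matches the paper's proof.
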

\begin{proof}
We construct a static randomized scheduler $\Theta$ which chooses an action $\xi\in \Sigma_i$ for the queue $i$ with probability
\begin{equation}\label{eq:prob_act}
P_{\xi}=\frac{\bar{\lambda}_{\xi}}{\sum_{\zeta\in \Sigma_i} \bar{\lambda}_{\zeta}}\quad , \quad \sum_{\zeta\in \Sigma_i} \bar{\lambda}_{\zeta}>0\,.
\end{equation}
Otherwise, if $\sum_{\zeta\in \Sigma_i} \bar{\lambda}_{\zeta}=0$, we may control the queue $i$ arbitrarily because no jobs ever come to the queue.  We further assume (w.l.o.g.) that such queues have been removed from the network, i.e., that $P_{\xi}$ is defined using~(\ref{eq:prob_act}) for all $\xi\in \bar{\Sigma}$.
Note that $\sum_{\xi\in \Sigma_i} P_{\xi}=1$ for every $i\in \{1,\ldots,n\}$.

Fixing the scheduler $\Theta$ we obtain a purely stochastic branching network whose traffic equations are deficiently solvable.
Formally, we define a new purely stochastic branching network $\mathcal{N}'$ with $n$ queues with
 the same arrival rate, the same arrival production function and the same queue rates as~$\mathcal{N}$.
Further, $\Net'$ has
 $R'_i=\bigcup_{\xi\in \Sigma_i} R_i(\xi)$ and the following production functions $\Prob'_i$ associated to queues:
\[
\Prob'_i(\vr)=\sum_{\xi\in \Sigma_i} P_{\xi}\cdot \Prob_i(\xi)(\vr)\quad , \quad\vr\in R'_i
\]
(Here we formally assume $\Prob_i(\xi)(\vr)=0$ for $\vr\not\in R_{\xi}$.)
The traffic equations~\eqref{eq:traffic_unc} for $\Net'$ have the following form:
\begin{equation}\label{eq:traffic-derived}
\lambda_{j} \quad = \quad \alpha_j+\sum_{i=1}^n\lambda_{i}\cdot A'_{i j}
\quad , \quad j \in \{1, \ldots, n\}
\end{equation}
with
\begin{align*}
A'_{i j} := \sum_{\vr\in R'_i} & \Prob'_i(\vr)\cdot \vr_j=
\sum_{\xi\in \Sigma_i}P_{\xi}  \sum_{\vr\in R'_i} \Prob_i(\xi)(\vr)\cdot \vr_j= \\
& =\sum_{\xi\in \Sigma_i} \frac{\bar{\lambda}_{\xi}}{\sum_{\zeta\in \Sigma_i} \bar{\lambda}_{\zeta}}\sum_{\vr\in R'_i} \Prob_i(\xi)(\vr)\cdot \vr_j =
\sum_{\xi\in \Sigma_i} \frac{\bar{\lambda}_{\xi}}{\sum_{\zeta\in \Sigma_i} \bar{\lambda}_{\zeta}}\cdot A_{\xi j}\,.
\end{align*}
Setting $\lambda_i := \sum_{\xi\in \Sigma_i} \bar{\lambda}_{\xi}$ for every $i\in \{1,\ldots,n\}$,
 we obtain $\lambda_i A'_{i j} = \sum_{\xi\in \Sigma_i} \bar{\lambda}_{\xi} A_{\xi j}$.
If we put this equality into the first equation of the traffic LP, we see that $(\lambda_1, \ldots, \lambda_n)$ solves~\eqref{eq:traffic-derived}. Also, $\lambda_j<\mu_j$ for all $j \in \{1, \ldots, n\}$.
Proposition~\ref{prop:traffic-erg} then implies that the scheduler $\Theta$ is ergodic.

Finally, let us concentrate on the utilization. Note that the utilization of any queue $i$ is the same in $\Net'$ as in
$\Net_{\Theta}$, so it suffices to concentrate on $\Net'$.
Observe that the matrix $I-A'$ is invertible by Lemma~\ref{lem:invertible}. 
This means that
$(\lambda_1, \ldots, \lambda_n)$ is, in fact, the {\em unique} solution of~\eqref{eq:traffic-derived}.
Then however, by Lemma~\ref{prop:erg-traffic}, the utilization $\rho_i$ of queue~$i$ in $\Net'$ (and thus also in $\Net_{\Theta}$) is equal to
$\frac{\lambda_i}{\mu_i}=\frac{\sum_{\xi\in \Sigma_i}\bar{\lambda}_{\xi}}{\mu_i}$.
\end{proof}
%
%
%
%
To complete the proof of Proposition~\ref{prop:traffic-LP}, we consider the problem of minimizing the maximal utilization $\max_i \rho_i$.
Let $\Theta_s$ be a static randomized ergodic scheduler induced by a solution of the traffic LP in the sense of Lemma~\ref{lem:LP-sched}
 (here we consider a solution which minimizes $\delta$).
Observe that the scheduler $\Theta_s$ minimizes $\max_i \rho_i$ among all schedulers induced by solutions of the traffic LP.
However, by Lemmas \ref{prop:erg-traffic}~and~\ref{lem:LP-sched}, for every memoryless scheduler $\Theta$
 there exists a static randomized scheduler induced by a solution of the traffic LP which has the same utilization of each queue as $\Theta$.
Thus $\Theta_s$ minimizes $\max_i \rho_i$ among all memoryless ergodic schedulers.

\section{Conclusions} \label{sec:conclusions}
We have suggested and studied controlled branching networks,
 a queueing model which extends Jackson networks
  by nondeterministic and branching features as required to model parallel systems.
Although much of the classical theory (such as product-form stationary distributions) no longer holds for controlled branching networks,
 we have shown that the traffic equations can be generalized.
This enabled us to construct a suitable Lyapunov function
 which we have used to establish strong stability properties.
We have shown for the controlled model that static randomized schedulers are sufficient to achieve those strong stability properties.
Linear programming can be used to efficiently compute such a scheduler, which at the same time minimizes the maximal queue utilization.

Future work should include the investigation of more performance measures, e.g., the long-time average queue size.
Can non-static schedulers help to minimize it?

\bibliographystyle{plain} 
\bibliography{db}

\iftechrep{
\newpage
\appendix
\section{Proofs}

\subsection{Proof of Lemma~\ref{lem:invertible}}

\begin{qlemma}{\ref{lem:invertible}}
 \stmtleminvertible
\end{qlemma}

\begin{proof}
By the traffic equation $\vlambda = \valpha + \vlambda A$ we have $\vlambda A \le \vlambda$, with $\vlambda > \vzero$.
By~\cite[Theorem~2.1.11]{book:BermanP} this implies that the spectral radius of~$A$ is at most~$1$
 (where the spectral radius is the largest absolute value of the eigenvalues of~$A$).

To show the statement of the lemma, it now suffices to show that $I-A$ is invertible.
Consider the (monotone) function $\vf : \Rp^n \to \Rp^n$ with $\vf(\vx) := \valpha + \vx A$.
By the traffic equation, $\vlambda$ is a fixed point of~$\vf$.
Assume for a contradiction that $I-A$ is singular, i.e., there is $\vy \in \R^n$ with $\vy \ne \vzero$ and $\vy A = \vy$.
Then $\vlambda + r \vy$ is a fixed point for all $r \in \R$.
Choose $r$ so that $\vz := \vlambda + r \vy \in \Rp^n$, but $\vz_i = 0$ for some $i \in \{1, \ldots, n\}$.
By the monotonicity of~$\vf$, all points $\vu \in \{ \vzero, \vf(\vzero), \vf(\vf(\vzero)), \ldots\}$
 satisfy $\vzero \le \vu \le \vz = \vf(\vz)$.
It follows that $\vu_i = 0$ for all~$\vu$.
This contradicts the fact that queue~$i$ is ``reachable''
 (recall that ``queue~$i$ is reachable'' means there is $j \in \N$ with $(\valpha A^j)_i \ne 0$).
\end{proof}

\subsection{Proof of Proposition~\ref{prop:traffic-erg}}
In this section we complete the proof of

\begin{qproposition}{\ref{prop:traffic-erg}}
 \stmtproptrafficerg
\end{qproposition}

Recall from Lemma~\ref{lem:invertible} that the matrix series $A^* := \sum_{i=0}^\infty A^i$ converges (``exists'') in~$\Rp^{n \times n}$ and equals $(I-A)^{-1}$.
Observe that for any vector $\vv^T = A^* \vw^T$ with $\vw \in \R^n$, we have $\vv^T = A \vv^T + \vw^T$.
We prove Lemma~\ref{lem:good-max}:

\begin{qlemma}{\ref{lem:good-max}}
 If $\vzero \ne \vx \in \Rp^n$ and $\vx_i = 0$, then $\vx \qsT{i} < V(\vx)$.
\end{qlemma}
\begin{proof}
For the proof we use the following notation:
For a matrix~$M$ we denote by $M_{i..j}$ the matrix obtained by restricting~$M$ to its rows indexed by $i, \ldots, j$.
Similarly, we denote by $M_{i..j,k..\ell}$ the matrix~$M$ restricted to the corresponding rows and columns.

Let $Q \in \Rp^{n \times n}$ denote the matrix whose columns are $\qsT{1}, \ldots, \qsT{n}$.
Note that $\vone Q = \vone$.
We have $Q = A^* D$ where $D \in \Rp^{n \times n}$ is the diagonal matrix with $1/D_{ii} = ( \vone A^* )_i$.
Further, we define the matrix $P \in \{-1,0,1\}^{n \times (n-1)}$ with
 \[
  P := \begin{pmatrix}
        -1 \ & \ -1 \ & \ -1 \ & \cdots & \ -1 \ & \ -1 \\
        1  &  0 &  0 & \cdots & 0  &  0 \\
        0  &  1 &  0 & \cdots & 0  &  0 \\
           &    &    & \ddots  \\
        0  & 0  &  0 & \cdots & 0  & 1
       \end{pmatrix} \;.
 \]

W.l.o.g.\ let $\vx_1 = 0$.
With the notation above and writing $\bar\vx := (x_2, \ldots, x_n) \in \Rp^{n-1}$,
 the lemma states that
\begin{equation} \label{eq:max-inequality}
 \text{there is no } \bar\vx \in \Rp^{n-1} \text{ with } \bar\vx \ne \vzero \text{ and } \bar\vx Q_{2..n} P \le \vzero \,.
\end{equation}
We use Farkas' lemma for the proof:
\begin{lemma}[Farkas' lemma]
 Let $M \in \R^{m \times n}$ and $\vb \in \R^m$.
 Then exactly one of the following is true:
 \begin{itemize}
   \item[1.]
    There exists $\vx \in \R^m$ with $\vx M \le \vzero$ and $\vx \vb^T > 0$.
   \item[2.]
    There exists $\vy \in \R^n$ with $M \vy^T = \vb^T$ and $\vy \ge \vzero$.
 \end{itemize}
\end{lemma}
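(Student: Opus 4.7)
The plan is to verify that the two alternatives are mutually exclusive (which is routine) and then to invoke a separating hyperplane argument for the substantive direction. Mutual exclusion is immediate: if both alternatives held simultaneously, we would have
\[
0 \;<\; \vx \vb^T \;=\; \vx (M \vy^T) \;=\; (\vx M)\, \vy^T \;\le\; 0,
\]
using $\vx M \le \vzero$ together with $\vy \ge \vzero$, a contradiction.

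For the substantive direction, I would work with the conic hull of the columns of $M$,
\[
C \;:=\; \bigl\{ M \vy^T \,:\, \vy \in \R^n,\ \vy \ge \vzero \bigr\} \;\subseteq\; \R^m,
\]
a convex cone containing $\vzero$. The crucial step is to show that $C$ is \emph{closed} in $\R^m$. I would establish closedness via Carath\'eodory's theorem for cones: every element of $C$ is a nonnegative combination of some linearly independent subset of at most $m$ columns of $M$, and for each fixed linearly independent subset $I$ of columns the image of $\Rp^{|I|}$ under the corresponding injective linear map is closed. Writing $C$ as the finite union over all such subsets then yields closedness. This is the main obstacle; it is standard but non-trivial, and without it the hyperplane argument breaks down.

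Given closedness of $C$, the remainder is routine. If $\vb^T \in C$, alternative~(2) holds. Otherwise the strict separating hyperplane theorem, applied to the singleton $\{\vb^T\}$ (which is compact) and the closed convex set $C$, produces a row vector $\vx \in \R^m$ and a scalar $\eta \in \R$ with $\vx (M \vy^T) < \eta$ for every $\vy \ge \vzero$ and $\vx \vb^T > \eta$. Since $\vzero \in C$ (take $\vy = \vzero$) we obtain $\eta \ge 0$, so $\vx \vb^T > 0$. Moreover, if $\vx (M \vy_0^T) > 0$ held for some $\vy_0 \ge \vzero$, then $t\vy_0 \ge \vzero$ for all $t \ge 0$ would force $\vx \bigl( M (t\vy_0)^T \bigr) = t \cdot \vx (M \vy_0^T) \to \infty$, contradicting the bound $\vx (M \vy^T) < \eta$ on $C$. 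Hence $\vx (M \vy^T) \le 0$ for every $\vy \ge \vzero$; specializing to $\vy = \es{j}$ for $j = 1, \ldots, n$ shows that every entry of $\vx M$ is nonpositive, i.e.\ $\vx M \le \vzero$, establishing alternative~(1).
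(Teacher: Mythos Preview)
Your argument is correct and is one of the standard proofs of Farkas' lemma: mutual exclusion by a one-line computation, closedness of the finitely generated cone $C$ via Carath\'eodory, and then strict separation of $\{\vb^T\}$ from $C$ followed by the usual cone-scaling trick to force $\vx M \le \vzero$. One minor imprecision: from $\vzero \in C$ and $\vx(M\vy^T) < \eta$ for all $\vy \ge \vzero$ you actually get $\eta > 0$, not merely $\eta \ge 0$; this is harmless for the conclusion.

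However, the paper does not prove this lemma at all. Farkas' lemma is invoked there as a classical, named result and used as a black box inside the proof of Lemma~\ref{lem:good-max}; no argument is supplied. So there is nothing to compare against on the paper's side: you have given a (correct) proof where the paper simply cites the statement.
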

Using Farkas' lemma we prove~\eqref{eq:max-inequality} by exhibiting $\vb > \vzero$ and $\vy \ge \vzero$ such that $Q_{2..n} P \vy^T = \vb^T$.
We choose
\[
 \vb^T := (A_{2..n,2..n})^* \vone^T \ge \vone^T > \vzero^T \,.
\]
We have $\vone Q = \vone$, which implies $\vone Q^{-1} = \vone Q Q^{-1} = \vone$,
 hence $P (Q^{-1})_{2..n} = Q^{-1} - E$ where $E \in \{0,1\}^{n \times n}$ denotes the matrix with $\vone$ in the first row and $\vzero$ in the other rows.
It follows
\begin{equation} \label{eq:PQP=QP}
 P (Q^{-1})_{2..n} P = Q^{-1} P \,.
\end{equation}
Setting
 \[
  \vy^T := (Q^{-1})_{2..n} P \vb^T
 \]
we therefore have
 \[
  Q_{2..n} P \vy^T = Q_{2..n} P (Q^{-1})_{2..n} P \vb^T \mathop{=}^{\eqref{eq:PQP=QP}} Q_{2..n}  Q^{-1} P \vb^T = P_{2..n} \vb^T = \vb^T \,,
 \]
 as desired.
It remains to prove $\vy \ge \vzero$.
As $Q = A^* D$, we have $Q^{-1} = D^{-1} (I-A)$ with $D^{-1}$ nonnegative, so it suffices to prove that $(I-A)_{2..n} P \vb^T \ge \vzero^T$.
Indeed we have
 \[
  (I-A)_{2..n} P \vb^T \ = \ (I - A_{2..n} P) \vb^T \ = \ \vb^T - A_{2..n} P \vb^T \ \ge \ \vb^T - A_{2..n,2..n} \vb^T \ = \ \vone^T \,,
 \]
 where the inequality holds as $\vb^T$ and the first column of~$A_{2..n}$ are nonnegative and the first row of~$P$ is negative.

\end{proof}

\noindent Recall the following lemma:

\begin{qlemma}{\ref{lem:piecewise-neg-drift}}
 \stmtlempiecewisenegdrift
\end{qlemma}

\medskip
Proposition~\ref{prop:traffic-erg} follows from Lemma~\ref{lem:piecewise-neg-drift}
 using exactly the reasoning from theorem~1 and lemma~5 of~\cite{DownMeyn97piecewise}.

\subsection{Proof of Lemma~\ref{prop:erg-traffic}} \label{app:contr}
%
%

\begin{qlemma}{\ref{prop:erg-traffic}}
\stmtpropergtraffic
\end{qlemma}
\medskip

We divide our proof of Lemma~\ref{prop:erg-traffic} into four claims.
\begin{claim}\label{lem:ex-freq}
For every $\xi\in \bar{\Sigma}$ there is a constant $O_{\xi}$ such that for almost all runs $\omega$ of $\Net_{\Theta}$ the limit
\[
\lim_{t\rightarrow{} \infty} \frac{O^{\leq t}_{\xi}(\omega)}{t}
\]
exists and is equal to $O_{\xi}$.
\end{claim}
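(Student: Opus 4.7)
\medskip\noindent\textbf{Proof plan for Claim~\ref{lem:ex-freq}.}
The plan is to apply the strong law of large numbers for regenerative processes to the CTMC $\Net_\Theta$. Since $\Theta$ is memoryless, the decisions at each visit to a state depend only on the current state, so the stochastic process $\Net_\Theta$ is in fact a (time-homogeneous) CTMC on state space~$\N^n$. By the assumed ergodicity, the state $\vzero$ is positive recurrent: the expected return time $\Ex{R}$ to~$\vzero$ is finite, and almost surely the process returns to~$\vzero$ infinitely often.

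The regenerative structure is then as follows. Let $0 = T_0 < T_1 < T_2 < \cdots$ be the successive return times to~$\vzero$. By the strong Markov property, the ``cycles'' $(\vx(t) \mid T_{k-1} \le t < T_k)$ are independent and, for $k \ge 2$, identically distributed. Let $N_k^{(\xi)}$ denote the number of times action~$\xi$ fires in the $k$th cycle. Then $N_2^{(\xi)}, N_3^{(\xi)}, \ldots$ are i.i.d., and the cycle lengths $T_k - T_{k-1}$ for $k \ge 2$ are i.i.d.\ with finite mean $\Ex{R}$.

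The step I expect to require the most care is showing $\Ex{N_2^{(\xi)}} < \infty$, since during a cycle the queue sizes could become arbitrarily large and one might worry about arbitrarily many transitions. To handle this, note that the total exit rate out of any state $\vx \ne \vzero$ is at most $\mu_{\max} := \mu_0 + \mu_1 + \cdots + \mu_n$ (the rates do not depend on the scheduler's choice, only on which queues are nonempty). Hence the total number of transitions occurring in the interval $[0, T_2]$ is stochastically dominated by the number of events of a Poisson process with rate $\mu_{\max}$ run for time $T_2$. Consequently $\Ex{N_2^{(\xi)}}$ is bounded by the expected number of transitions in one cycle, which is at most $\mu_{\max} \cdot \Ex{R} < \infty$.

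With finiteness of $\Ex{N_2^{(\xi)}}$ established, the classical SLLN for regenerative (or renewal-reward) processes applies: for almost every~$\omega$,
\[
  \lim_{t \to \infty} \frac{O^{\le t}_\xi(\omega)}{t} \;=\; \frac{\Ex{N_2^{(\xi)}}}{\Ex{R}} \;=:\; O_\xi,
\]
a deterministic constant independent of~$\omega$. The contribution of the (possibly differently distributed) initial cycle and of the partial final cycle vanishes in the limit by the same bound together with $t \to \infty$, so this yields the claimed almost-sure convergence to $O_\xi$.
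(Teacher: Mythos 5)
Your proposal is correct and follows essentially the same route as the paper: decompose the run into i.i.d.\ regeneration cycles at successive visits to $\vzero$, bound the expected number of firings per cycle using the uniform bound on transition rates together with $\Ex{R}<\infty$, and conclude by the strong law of large numbers. The only cosmetic difference is that you invoke the renewal-reward/regenerative SLLN as a black box, whereas the paper derives the limit explicitly via a sandwich inequality between ratios of cycle counts and cycle lengths.
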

\begin{proof}
Given an action $\xi\in \bar{\Sigma}$ and $i\geq 1$, denote by $M^{\xi}_i$ a random variable giving the number of times the action $\xi$ is fired between the $i$-th and $i+1$-st visit to $\vzero$ (note that the network starts with all empty queues so the first visit happens at time $0$). Observe that all $M^{\xi}_i$
are identically distributed.
Also, as the expected return time to $\vzero$ is finite and there are only finitely many distinct rates (i.e. the expected time between two state transitions is bounded from below by a positive constant), the expectation $\Ex{M^{\xi}_1}$ is finite. By the strong law of large numbers (see~e.g.~\cite{Billingsley:book}), there is a constant $M^{\xi}$ such that for almost all runs $\omega$ the limit $\lim_{n\rightarrow \infty} \frac{\sum_{i=1}^n M^{\xi}_i(\omega)}{n}$ exists and is equal to $M^{\xi}$.
For $i\geq 1$ we define $R_i$ to be a random variable giving the total time between the $i$-th and $i+1$-st visits to $\vzero$.
Note that $R_1$ is precisely the variable $R$ giving the return time to $\vzero$.
 By the strong law of large numbers, for almost all runs $\omega$ the limit $\lim_{n\rightarrow \infty} \frac{\sum_{i=1}^n R_i(\omega)}{n}$ exists and is equal to $\Ex{R_1}=\Ex{R}<\infty$.

Now note that for $t\geq 0$ and $n\geq 1$ satisfying $\sum_{i=1}^{n-1} R_i(\omega)\leq t\leq \sum_{i=1}^n R_i(\omega)$ (here for $n=1$ we assume $\sum_{i=1}^{n-1} R_i(\omega)=0$) we have
\begin{equation}\label{eq:three}
\frac{\sum_{i=1}^{n-1} M^{\xi}_i(\omega)}{\sum_{i=1}^n R_i(\omega)}\quad \leq \quad\frac{O^{\leq t}_{\xi}(\omega)}{t}\quad
\leq \quad\frac{\sum_{i=1}^{n} M^{\xi}_i(\omega)}{\sum_{i=1}^n R_i(\omega)}
\end{equation}
The limit of the rightmost term of (\ref{eq:three}) exists:
\[
\frac{M^{\xi}}{\Ex{R}}\quad = \quad\lim_{n\rightarrow \infty} \frac{\sum_{i=1}^n M^{\xi}_i(\omega)}{n}\cdot \lim_{n\rightarrow \infty}\frac{n}{\sum_{i=1}^n R_i(\omega)}\quad = \quad \lim_{n\rightarrow \infty} \frac{\sum_{i=1}^n M^{\xi}_i(\omega)}{\sum_{i=1}^n R_i(\omega)}
\]
Similarly, the limit of the left most term of (\ref{eq:three}) exists:
\[
\frac{M^{\xi}}{\Ex{R}}\quad = \quad \left(\lim_{n\rightarrow \infty} \frac{\sum_{i=1}^{n-1} M^{\xi}_i(\omega)}{n-1}\cdot\lim_{n\rightarrow \infty} \frac{ n-1}{n}\right)\cdot \lim_{n\rightarrow \infty}\frac{n}{\sum_{i=1}^n R_i(\omega)}\quad = \quad \lim_{n\rightarrow \infty} \frac{\sum_{i=1}^n M^{\xi}_i(\omega)}{\sum_{i=1}^n R_i(\omega)}
\]
Thus by basic properties of limits, the limit
$\lim_{t\rightarrow \infty} \frac{O^{\leq t}_{\xi}(\omega)}{t}$ exists
and is equal to $\frac{M^{\xi}}{\Ex{R}}$.
\end{proof}
The following two claims show that the frequencies indeed solve the traffic LP with $\delta<1$.
\begin{claim}
For every $j\in \{1,\ldots,n\}$ we have
\begin{equation}\label{eq:flow}
\sum_{\xi\in \Sigma_j} O_{\xi} = \sum_{i=0}^n \sum_{\zeta\in \Sigma_i} O_{\zeta}\cdot A_{\zeta j}
\end{equation}
Here we assume (in order to simplify notation) that the external source is a (symbolic) queue $0$, with action set $\Sigma=\{\iota\}$ where $\iota$ has rate $\mu_0$ and a production function $\Prob_0$ (thus $O_{\iota}\cdot A_{\iota j}=\valpha_j$).
\end{claim}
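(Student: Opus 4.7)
The plan is a flow-conservation argument at queue~$j$. Writing $I_j(t)$ for the total number of $j$-jobs produced by firings (external or internal) up to time~$t$ and $D_j(t)$ for the total number of $j$-jobs removed from queue~$j$ up to time~$t$, one has $\vx_j(t) = I_j(t) - D_j(t)$. Since each firing of an action $\xi \in \Sigma_j$ removes exactly one $j$-job, and this is the only way a $j$-job can leave queue~$j$,
\[
D_j(t) \quad = \quad \sum_{\xi\in \Sigma_j} O^{\leq t}_{\xi}(\omega),
\]
so Claim~\ref{lem:ex-freq} immediately gives $D_j(t)/t \to \sum_{\xi \in \Sigma_j} O_\xi$ almost surely.

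For the inflow, I would observe that at the $k$th firing of a fixed action $\zeta \in \Sigma_i$ (where the symbolic case $\zeta = \iota$, $i=0$ covers external arrivals), the number $X^{\zeta,j}_k$ of $j$-jobs added is an independent sample of the marginal $\vr \mapsto \vr_j$ under $\Prob_i(\zeta)$, so the $X^{\zeta,j}_k$ are i.i.d.\ in $k$ with mean $A_{\zeta j}$ and are bounded by the branching factor~$K$. Hence
\[
I_j(t) \quad = \quad \sum_{i=0}^n \sum_{\zeta \in \Sigma_i} \sum_{k=1}^{O^{\leq t}_\zeta(\omega)} X^{\zeta,j}_k.
\]
Applying the ordinary strong law of large numbers to $(X^{\zeta,j}_k)_{k\geq 1}$ and combining with the limit $O^{\leq t}_\zeta(\omega)/t \to O_\zeta$ from Claim~\ref{lem:ex-freq} (actions with $O_\zeta = 0$ contribute $o(t)$ a.s., since $X^{\zeta,j}_k \leq K$), one obtains $I_j(t)/t \to \sum_{i=0}^n \sum_{\zeta \in \Sigma_i} O_\zeta \cdot A_{\zeta j}$ almost surely.

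To close the argument I would invoke ergodicity to force $\vx_j(t)/t \to 0$ along a suitable subsequence. Since $\Ex{R} < \infty$, almost surely there is a sequence of times $\tau_n \to \infty$ with $\vx(\tau_n) = \vzero$, and the strong law of large numbers applied to the i.i.d.\ inter-return times yields $\tau_n/n \to \Ex{R}$; in particular $\vx_j(\tau_n)/\tau_n = 0$ for every~$n$. Passing to the limit along this subsequence in $\vx_j(\tau_n) = I_j(\tau_n) - D_j(\tau_n)$ and plugging in the two limits established above gives the equation~\eqref{eq:flow} as an almost-sure equality between constants, hence as an exact equality. The main technical obstacle is the random-index SLLN used in the middle step; the cleanest way to handle it is to split the time axis into the regenerative cycles between successive visits to~$\vzero$ and to apply the ordinary SLLN to the i.i.d.\ per-cycle contributions $\sum_{\text{cycle } i} X^{\zeta,j}_k$, exactly in the style of the proof of Claim~\ref{lem:ex-freq}.
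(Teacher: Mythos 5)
Your proposal is correct and follows essentially the same route as the paper's proof: flow conservation at queue~$j$, the firing frequencies from Claim~\ref{lem:ex-freq}, the strong law of large numbers for the i.i.d.\ per-firing productions (which the paper only asserts via the ratio $O^{\leq t}_{\zeta\rightarrow j}/O^{\leq t}_{\zeta} \to A_{\zeta j}$), and the fact that the queue empties infinitely often to equate the inflow and outflow rates. Your use of return times to $\vzero$ in place of the paper's two-sided inequality between cumulative inflow and outflow, and your explicit handling of actions with $O_\zeta = 0$, are only cosmetic differences.
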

\begin{proof}
Denote by $O^{\leq t}_{\zeta\rightarrow j}(\omega)$ the number of jobs produced for queue $j$ by firing the action $\zeta$. The crucial observation is that for almost all runs $\omega$ we have
\[
\lim_{t\rightarrow \infty} \frac{\sum_{\xi\in \Sigma_j} O^{\leq t}_{\xi}(\omega)}{t}\quad = \quad \lim_{t\rightarrow \infty}\frac{\sum_{i=0}^n \sum_{\zeta\in \Sigma_i} O^{\leq t}_{\zeta\rightarrow j}(\omega)}{t}
\]
Indeed, $\sum_{\xi\in \Sigma_j} O^{\leq t}_{\xi}(\omega)\leq \sum_{i=0}^n \sum_{\zeta\in \Sigma_i} O^{\leq t}_{\zeta\rightarrow j}(\omega)$ for all $t$ and $\omega$ because $\sum_{i=0}^n \sum_{\zeta\in \Sigma_i} O^{\leq t}_{\zeta\rightarrow j}(\omega)$ is precisely the number of jobs that enter the queue $j$ up to time $t$. On the other hand, $\sum_{i=0}^n \sum_{\zeta\in \Sigma_i} O^{\leq t}_{\zeta\rightarrow j}(\omega)\leq \sum_{\xi\in \Sigma_j} O^{\leq t}_{\xi}(\omega)$ for infinitely many $t$ and almost all $\omega$ because (almost surely) queue $j$ becomes empty infinitely many times.

We obtain that
\begin{eqnarray*}
\sum_{\xi\in \Sigma_j} O_{\xi} & = & \sum_{\xi\in \Sigma_j} \lim_{t\rightarrow{} \infty} \frac{O^{\leq t}_{\xi}(\omega)}{t} \\
 & = & \lim_{t\rightarrow{} \infty} \frac{\sum_{\xi\in \Sigma_j} O^{\leq t}_{\xi}(\omega)}{t} \\
 & = & \lim_{t\rightarrow{} \infty} \frac{\sum_{i=0}^n \sum_{\zeta\in \Sigma_i} O^{\leq t}_{\zeta\rightarrow j}(\omega)}{t} \\
 & = & \sum_{i=0}^n \sum_{\zeta\in \Sigma_i}\lim_{t\rightarrow{} \infty} \frac{O^{\leq t}_{\zeta\rightarrow j}(\omega)}{t} \\
 & = & \sum_{i=0}^n \sum_{\zeta\in \Sigma_i}
 \lim_{t\rightarrow{} \infty} \frac{O^{\leq t}_{\zeta}(\omega)}{t}\cdot
 \lim_{t\rightarrow{} \infty}  \frac{O^{\leq t}_{\zeta\rightarrow j}(\omega)}{O^{\leq t}_{\zeta}(\omega)}  \\
 & = & \sum_{i=0}^n \sum_{\zeta\in \Sigma_i} O_{\zeta}\cdot A_{\zeta j}
\end{eqnarray*}
Here the last equality follows from Claim~1 and the fact that
$\lim_{t\rightarrow{} \infty}  \frac{O^{\leq t}_{\zeta\rightarrow j}(\omega)}{O^{\leq t}_{\zeta}(\omega)}$ is
the average number of jobs for queue $j$ produced by firing $\zeta$, i.e., $A_{\zeta j}$.
\end{proof}
\begin{claim}
For every $i\in \{1,\ldots,n\}$ we have
\begin{equation}\label{eq:cap}
\frac{\sum_{\xi\in \Sigma_i} O_{\xi}}{\mu_{i}}<1
\end{equation}
\end{claim}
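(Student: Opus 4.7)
The plan is to interpret the left-hand side of~\eqref{eq:cap} as the long-run firing rate of queue~$i$, so that the claim reduces to the strict inequality $\rho_i<1$ for the utilization $\rho_i$ of queue~$i$.  I would split the argument into two steps: first, identify $\sum_{\xi\in\Sigma_i}O_\xi=\mu_i\rho_i$; second, show $\rho_i<1$ using positive recurrence at~$\vzero$.

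For the identity, my plan is to exploit that $\mu_i$ does not depend on the action, so queue~$i$ fires at total rate~$\mu_i$ when $\vx_i\ne 0$ and at rate~$0$ otherwise.  Since $\Theta$ is memoryless, $\Net_\Theta$ is a CTMC, and the counting process $N_i(t)$ of firings of queue~$i$ on $[0,t]$ has compensator $\mu_i\int_0^t \mathbf{1}_{\vx_i(s)\ne 0}\,ds$.  Applying optional stopping at the return time~$R$ yields
\[
\mathbb{E}[N_i(R)] \;=\; \mu_i\cdot\mathbb{E}\Bigl[\int_0^R \mathbf{1}_{\vx_i(s)\ne 0}\,ds\Bigr].
\]
This is legitimate because $\mathbb{E}[R]<\infty$ and, as in the proof of Claim~1, the uniform bound on transition rates gives $\mathbb{E}[N_i(R)]<\infty$.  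Dividing by $\mathbb{E}[R]$ and repeating the regenerative SLLN argument of Claim~1 on each side produces $\sum_{\xi\in\Sigma_i}O_\xi=\mu_i\rho_i$ (the left side is the sum of the per-action frequencies provided by Claim~1, and the ratio on the right is, again by a standard regenerative argument, the long-run fraction of time that queue~$i$ is non-empty).

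For $\rho_i<1$, I would observe that every regeneration cycle begins in state~$\vzero$, from which the only enabled transition is an external arrival at rate~$\mu_0$.  Thus the initial sojourn in~$\vzero$ is exponentially distributed with mean $1/\mu_0>0$, and throughout it $\vx_i=0$.  Consequently
\[
\mathbb{E}\Bigl[\int_0^R\mathbf{1}_{\vx_i(s)\ne 0}\,ds\Bigr] \;\le\; \mathbb{E}[R]-\tfrac{1}{\mu_0} \;<\;\mathbb{E}[R],
\]
and dividing by $\mathbb{E}[R]$ gives $\rho_i<1$.  Combining this with the identity from Step~1 yields~\eqref{eq:cap}.

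The only technically delicate point is the optional-stopping justification of the compensator identity at the random time~$R$, but the integrability it requires was already packaged into the regenerative-cycle machinery of Claim~1, so no additional machinery beyond what is used there is needed.
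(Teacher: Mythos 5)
Your proof is correct and reaches the same conclusion with the same underlying insight — the strict slack comes from the positive fraction of each regeneration cycle spent in state $\vzero$ — but the technical route differs from the paper's. The paper argues purely from sample paths: it introduces $T_n$ (elapsed time until queue~$i$ completes $n$ jobs), $N_j$ (processing time of the $j$-th job), and $E_n$ (time in $\vzero$ before $n$ jobs finish), obtains $\lim \sum_j N_j/T_n = \sum_\xi O_\xi/\mu_i$ via the SLLN, identifies $\lim E_n/T_n=\pi(\vzero)$ via the ergodic theorem for CTMCs, and concludes from $\sum_j N_j + E_n \le T_n$. You instead establish $\sum_\xi O_\xi = \mu_i\rho_i$ by writing the firing counter $N_i(t)$ with compensator $\mu_i\int_0^t\mathbf{1}[\vx_i(s)\ne 0]\,ds$ and applying optional stopping at $R$, and then bound $\rho_i$ strictly below $1$ by the exponential sojourn of mean $1/\mu_0$ that begins every cycle. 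These two bounds are numerically identical (by renewal-reward, $\pi(\vzero)=\tfrac{1/\mu_0}{\E[R]}$), so neither is sharper; your version trades the explicit ergodic-theorem step for a martingale argument and is arguably a bit shorter since it gets $\rho_i<1$ directly from the structure of state~$\vzero$ without introducing $E_n$. One minor caveat worth stating explicitly in a polished write-up is the justification that the compensator identity holds with equality after optional stopping at the random time~$R$; as you note, it reduces to $\E[R]<\infty$ together with bounded transition rates (so $\E[N_i(R)]<\infty$), which is exactly the integrability the paper's Claim~1 already relies on.
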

\begin{proof}
Intuitively, we show that $\frac{\sum_{\xi\in \Sigma_i} O_{\xi}}{\mu_{i}}+\pi(\vzero)$ is equal to the proportion of time in which either the queue $i$ operates a job or all queues are empty. As this proportion is at most $1$ and $\pi(\vzero)>0$, we obtain the desired result.

Formally, denote by $T_n$ the total amount of time in which the first $n$ jobs are processed by queue $i$ (by this we mean the total time from the beginning of the computation needed to finish $n$ jobs using the ``machine'' operating queue $i$). Denote by $N_j$ the total time it takes to process the $j$-th job from queue $i$ (by this we mean the time the ``machine'' actively spends by processing the job).

Observe that for almost all runs $\omega$ we have that $\lim_{n\rightarrow \infty}
\frac{\sum_{j=1}^{n} N_j(\omega)}{n}$ is the average processing time of the queue $i$ which is $\frac{1}{\mu_i}$, and that
$\frac{n}{T_n(\omega)}$ is equal to
\[
\frac{\sum_{\xi\in \Sigma_i} O^{\leq T_n(\omega)}_{\xi}}{T_n(\omega)}
\]
which means that by Claim~1, 
\[
\lim_{n\rightarrow \infty}\frac{n}{T_n(\omega)}\quad =\quad
\lim_{n\rightarrow \infty} \frac{\sum_{\xi\in \Sigma_i} O^{\leq T_n(\omega)}_{\xi}}{T_n(\omega)}\quad = \quad \sum_{\xi\in \Sigma_i} O_{\xi}
\]
Thus
\begin{equation}\label{eq:util}
\lim_{n\rightarrow \infty}
\frac{\sum_{j=1}^{n} N_j(\omega)}{T_n(\omega)}\quad = \quad \lim_{n\rightarrow \infty}
\frac{\sum_{j=1}^{n} N_j(\omega)}{n}\cdot \lim_{n\rightarrow \infty}\frac{n}{T_n(\omega)}\quad = \quad
\frac{\sum_{\xi\in \Sigma_i} O_{\xi}}{\mu_i}
\end{equation}
Denote by $E_n$ the total amount of time the network spends in state $\vzero$ before queue~$i$ finishes $n$ jobs (by this we mean the sum of all time intervals up to time $T_n$ in which all queues are simultaneously idle before $n$ jobs are finished by the ``machine'' operating queue $i$).
Intuitively, $\lim_{n\rightarrow \infty} \frac{E_n(\omega)}{T_n(\omega)}$ is equal to the proportion of time spent in $\vzero$ which is equal to $\pi(\vzero)>0$. More precisely, by the Ergodic Theorem, see~Theorem 3.8.1~of~\cite{No98}, for almost all runs $\omega$ we have
\begin{equation}\label{eq:invzero}
0\quad <\quad \pi(\vzero)\quad = \quad \lim_{t\rightarrow \infty} \frac{1}{t} \int_0^t I[\vx(\omega)=\vzero](s) ds\quad = \quad
\lim_{n\rightarrow \infty} \frac{E_n(\omega)}{T_n(\omega)}
\end{equation}
Finally, by (\ref{eq:util}) and (\ref{eq:invzero}),
{
\begin{align*}\label{eq:proportion}
\frac{\sum_{\xi\in \Sigma_i} O_{\xi}}{\mu_i}\ & < \
\frac{\sum_{\xi\in \Sigma_i} O_{\xi}}{\mu_i}+\pi(\vzero)\ =\ \\ &
 \frac{\sum_{j=1}^{n} N_j(\omega)}{n}\cdot \lim_{n\rightarrow \infty}\frac{n}{T_n(\omega)}+
\lim_{n\rightarrow \infty}\frac{E_n(\omega)}{T_n(\omega)}
\ = \
\lim_{n\rightarrow \infty}
\frac{\sum_{j=1}^{n} N_j(\omega)+E_n(\omega)}{T_n(\omega)}
\ \leq \ 1
\end{align*}
}
\end{proof}
%
%
%
\begin{claim}
For every $i\in \{1,\ldots,n\}$, $\rho_i=\frac{\sum_{\xi\in \Sigma_i} O_{\xi}}{\mu_i}$.
\end{claim}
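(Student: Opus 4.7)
The plan is to express $\rho_i$ as a long-run time average via the Ergodic Theorem and then match this average with quantities already analyzed in the proof of Claim~3. Because $\Net_\Theta$ is ergodic, $\rho_i$ is the stationary probability of the event $\{\vx_i \neq 0\}$, and the Ergodic Theorem (exactly as invoked to derive~\eqref{eq:invzero}) yields, for almost all runs $\omega$,
\[
\rho_i \;=\; \lim_{t\to\infty} \frac{1}{t}\int_0^t I[\vx_i(\omega)\neq 0](s)\,ds.
\]

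The key observation is that the integrand is exactly the indicator of the ``machine'' operating queue~$i$ being busy: in the induced CTMC, queue~$i$ fires at rate $\mu_i$ iff $\vx_i\neq 0$, so the server is idle precisely on $\{\vx_i=0\}$. Consequently, evaluating at $t=T_n(\omega)$, the total busy time of the machine up to $T_n(\omega)$ coincides with the sum of the $n$ individual service times $N_j$ defined in the proof of Claim~3:
\[
\int_0^{T_n(\omega)} I[\vx_i(\omega)\neq 0](s)\,ds \;=\; \sum_{j=1}^{n} N_j(\omega).
\]

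Finally I would pass to the subsequence $t=T_n(\omega)$. Since queue~$i$ is reachable and $\Theta$ is ergodic, queue~$i$ almost surely completes infinitely many jobs, so $T_n(\omega)\to\infty$ a.s. The time-average limit therefore also holds along this subsequence, giving
\[
\rho_i \;=\; \lim_{n\to\infty}\frac{\sum_{j=1}^n N_j(\omega)}{T_n(\omega)} \;=\; \frac{\sum_{\xi\in \Sigma_i} O_\xi}{\mu_i},
\]
where the second equality is precisely~\eqref{eq:util} from the proof of Claim~3. The only mildly delicate step is the busy-time identity above; the rest is bookkeeping against facts already established, and the case in which $\sum_{\xi\in\Sigma_i} O_\xi = 0$ is excluded by the standing reachability assumption on queue~$i$.
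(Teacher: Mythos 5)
Your proof is correct and is essentially the paper's own argument: both invoke the Ergodic Theorem to write $\rho_i$ as the long-run fraction of time with $\vx_i\neq 0$, identify the busy time of queue~$i$ up to $T_n(\omega)$ with $\sum_{j=1}^n N_j(\omega)$, and conclude via~(\ref{eq:util}). You merely make explicit the busy-time identity and the passage to the subsequence $t=T_n(\omega)$, which the paper leaves implicit in its chain of equalities.
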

\begin{proof}
By the Ergodic Theorem, see~Theorem 3.8.1~of~\cite{No98}, and by~(\ref{eq:util}), for almost all runs $\omega$ we have
\begin{align*}
\rho_i\quad = \quad \lim_{t \to \infty} & \Pr(\vx_i(t) \ne 0)\quad  = \quad \sum_{\vx\in \N^n:\vx_i\not =0}  \pi(\vx)\quad \\ & = \quad
\lim_{t\rightarrow \infty}  \frac{1}{t} \int_0^t I[\vx_i(\omega)\not = 0](s) ds
\quad  = \quad \lim_{n\rightarrow \infty}
\frac{\sum_{j=1}^{n} N_j(\omega)}{T_n(\omega)}\quad =\quad\frac{\sum_{\xi\in \Sigma_i} O_{\xi}}{\mu_i}
\end{align*}
\end{proof}

\noindent This finishes a proof of Lemma~\ref{prop:erg-traffic}.

}{}

\end{document}